\documentclass[letterpaper]{article}
\usepackage[T1]{fontenc}
\usepackage[utf8]{inputenc}
\usepackage[margin=1.30in]{geometry}

\usepackage{amsmath}
\usepackage{amsthm}
\usepackage{amssymb}
\usepackage[fleqn]{mathtools}
\usepackage{dsfont}
\usepackage{enumitem}
\usepackage{float}
\usepackage[noblocks]{authblk}
\usepackage{hyperref} 
\usepackage{multicol}
\usepackage{graphicx}
\usepackage{caption}
\usepackage[ruled,linesnumbered,noresetcount]{algorithm2e}
\usepackage{color}
\usepackage{soul}
\usepackage{booktabs,tabularx}
\usepackage{lscape}
\usepackage{epstopdf}

\hypersetup{
  colorlinks   = true, 
  urlcolor     = blue, 
  linkcolor    = blue, 
  citecolor   = red 
}


\SetKwRepeat{Do}{do}{while}%

\setlength{\parskip}{1pt plus 1pt}
\setlength{\mathsurround}{1pt}

\DeclareMathOperator*{\argmin}{arg\,min} 
\DeclareMathOperator*{\arginf}{arg\,inf} 

\begin{document}

\newtheorem{theorem}{Theorem}[section]
\newtheorem{lemma}[theorem]{Lemma}
\newtheorem{corollary}[theorem]{Corollary}
\newtheorem{proposition}[theorem]{Proposition}
\theoremstyle{definition}\newtheorem{definition}[theorem]{Definition}
\theoremstyle{definition}\newtheorem{problem}{Problem}
\theoremstyle{definition}\newtheorem{example}[theorem]{Example}
\theoremstyle{definition}\newtheorem{assumption}[theorem]{Assumption}
\theoremstyle{remark}\newtheorem{remark}[theorem]{Remark}
\newcommand{\eq}[1]{\begin{align}#1\end{align}}
\newcommand{\eqn}[1]{\begin{align*}#1\end{align*}}
\newcommand{\demi}{\frac{1}{2}}
\newcommand{\disp}{\displaystyle}
\newcommand{\lb}{\langle}
\newcommand{\rb}{\rangle}
\newcommand{\R}{\mathbb R}
\newcommand{\bN}{\mathbb N}
\newcommand{\bE}{\mathbb E}
\newcommand{\bZ}{\mathbb Z}
\newcommand{\bQ}{\mathbb Q}
\newcommand{\bP}{\mathbb P}
\newcommand{\bF}{\mathbb F}
\newcommand{\bG}{\mathbb G}
\newcommand{\bS}{\mathbb S}
\newcommand{\cA}{\mathcal A}
\newcommand{\cB}{\mathcal B}
\newcommand{\cF}{\mathcal F}
\newcommand{\cG}{\mathcal G}
\newcommand{\cP}{\mathcal P}
\newcommand{\cM}{\mathcal M}
\newcommand{\cN}{\mathcal N}
\newcommand{\cV}{\mathcal V}
\newcommand{\cX}{\mathcal X}
\newcommand{\dt}{\partial_t}
\newcommand{\dz}{\partial_Z}
\newcommand{\dzz}{\partial_{ZZ}}
\newcommand{\dv}{\partial_V}
\newcommand{\dvv}{\partial_{VV}}
\newcommand{\dzv}{\partial_{ZV}}
\newcommand{\Dt}{\frac{d}{dt}}
\newcommand{\Dtt}{\frac{d^2}{dt^2}}
\newcommand{\Dx}{\nabla_x}
\newcommand{\Dxx}{\nabla^2_x}
\newcommand{\EP}{\E^\bP}
\newcommand{\intT}{\int_0^T}
\newcommand{\intRd}{\int_{\R^d}}
\newcommand{\xbar}{\bar{x}}
\newcommand{\ybar}{\bar{y}}
\newcommand{\zbar}{\bar{z}}
\newcommand{\rhobar}{\bar{\rho}}
\newcommand{\etabar}{\bar{\eta}}
\newcommand{\brho}{\bar{\rho}}
\newcommand{\bsigma}{\bar{\sigma}}
\newcommand{\balpha}{\bar{\alpha}}
\newcommand{\bbeta}{\bar{\beta}}
\newcommand{\bcA}{\bar{\cA}}
\newcommand{\bcB}{\bar{\cB}}
\newcommand{\trho}{\tilde{\rho}}
\newcommand{\tcA}{\tilde{\cA}}
\newcommand{\tcB}{\tilde{\cB}}
\newcommand{\parlamphi}{\partial_{\lambda_i}\phi}
\newcommand{\abs}[1]{\lvert #1 \rvert}
\newcommand{\norm}[1]{\lVert #1 \rVert}
\newcommand{\iit}[1]{{\it #1}}
\newcommand{\bbf}[1]{{\bf #1}}
\newcommand{\braket}[2]{\langle#1,#2\rangle}

\title{Joint Modelling and Calibration of SPX and VIX by Optimal Transport}

\author[1,2]{Ivan Guo}
\author[1,2]{Gr\'egoire Loeper}
\author[3]{Jan Ob\l\'oj}
\author[1]{Shiyi Wang}
\affil[1]{School of Mathematics, Monash University, Clayton, VIC, Australia}
\affil[2]{Centre for Quantitative Finance and Investment Strategies, \protect\\ Monash University, Clayton, VIC, Australia}
\affil[3]{University of Oxford, Oxford, United Kingdom}
\date{\today}

\maketitle

\begin{abstract}
This paper addresses the joint calibration problem of SPX options and VIX options or futures. We show that the problem can be formulated as a semimartingale optimal transport problem under a finite number of discrete constraints, in the spirit of [arXiv:1906.06478]. We introduce a PDE formulation along with its dual counterpart. The solution, a calibrated diffusion process, can be represented via the solutions of Hamilton--Jacobi--Bellman equations arising from the dual formulation. The method is tested on both simulated data and market data. Numerical examples show that the model can be accurately calibrated to SPX options, VIX options and VIX futures simultaneously. \\

\noindent
\textbf{Keywords.} joint calibration, SPX, VIX, optimal transport, HJB equation \\
\textbf{AMS subject classifications.} 91G20, 91G80, 60H30
\end{abstract}

\section{Introduction}

The CBOE Volatility Index (VIX), also known as the stock market's ``fear gauge'', reflects the expectations of investors on the volatility of the S\&P500 index (SPX) over the next 30 days. Although the index in itself is not a tradable asset, its derivatives such as futures and options are highly liquid. Since the VIX options started trading in 2006, researchers and practitioners have been putting a lot of effort in jointly calibrating models to the SPX and VIX options prices. It has proven to be a challenging problem. As noted by many authors (e.g., \cite{jacquier2018vix, song2012tale}), inconsistencies might appear between the volatility-of-volatility inferred from SPX and VIX.

In the literature, the first attempt at jointly calibrating with continuous models\footnote{Continuous models refer to continuous-time models with continuous SPX paths.} was made by Gatheral \cite{gatheral2008consistent}, who considered a two-factor stochastic volatility model. Other attempts include a Heston model with stochastic volatility-of-volatility by Fouque and Saporito \cite{fouque2018heston} and a regime-switching stochastic volatility model by Goutte et al. \cite{goutte2017regime}. In addition, many authors have tried incorporating jumps into the SPX dynamics, see, e.g., \cite{baldeaux2014consistent,cont2013consistent,kokholm2015joint,pacati2018smiling,papanicolaou2014regime}. However, even with jumps, these models have yet to achieve satisfactory accuracy, particularly for short maturities. This leads to a natural question of whether there exists a continuous model which can capture the SPX and VIX smiles simultaneously. In \cite{acciaio2020inversion, guyon2020inversion}, Acciaio and Guyon provide a necessary condition for the existence of such continuous models. Their work was followed by the contribution of 
Gatheral et al. \cite{gatheral2020quadratic} who introduced the so-called quadratic rough Heston model that aims to provide a good approximation for both SPX and VIX smiles with only six parameters. Notably, apart from continuous models, a remarkable result was obtained by Guyon \cite{guyon2020joint} recently, who accurately reproduced the SPX and VIX smiles by modelling the distributions of SPX in discrete time. 

Recently, the theory of optimal transport was adapted to solve problems in robust hedging and pricing both in discrete and in continuous time models, see \cite{BHLP:13,HenryLabordere:2014hta}. It has proved a powerful tool since then and its applications were extended to non-parametric model calibration. In particular, the discrete-time martingale optimal transport has been applied to derive model-independent bounds on VIX derivatives by De Marco and Henry-Labordere \cite{de2015linking}. The theory has been further used to calibrate the non-parametric discrete-time model proposed by Guyon \cite{guyon2020joint}. Continuous-time optimal transport was applied by three of the authors of this paper to the calibration of local volatility \cite{guo2017local} and local-stochastic volatility models \cite{guo2019calibration} to European options. Furthermore, in \cite{guo2018path}, the first two authors have extended the semimartingale optimal transport problem \cite{tan-touzi2013} to a more general path-dependent setting. Their work expands the available calibration instruments from European options to path-dependent options, such as Asian options, barrier options and lookback options.

In this paper, we introduce a time continuous formulation of the joint calibration problem. Instead of directly modelling the instantaneous volatility of the SPX or the VIX index, we consider a semimartingale $X$ whose first element $X^1$ is the logarithm of the SPX price and whose second element $X^2$ is defined as the expectation of the forward quadratic variation of $X^1$. By doing so, the calibration exercise only depends on the marginals of $X$ at fixed times, and the joint calibration problem falls into the class of the semimartingale optimal transport problem studied in \cite{guo2019calibration}. As a corollary of the superposition principle of Trevisan \cite{Trevisan2016existence} (or earlier Figalli \cite{Figalli2008existence} for the bounded coefficients case), for any probability measure such that the drift and diffusion of $X$ are adapted processes, there exists another measure under which the semimartingale $X$ reduces to a time-inhomogeneous diffusion and has the same marginals at fixed times under both measures. It is worth noting that the idea of using diffusion processes to mimic an It\^o process by matching their marginals at fixed times traces back to the classical mimicking theorem of Gy\"ongy \cite{gyongy1986mimicking}, which was later extended by Brunick and Shreve \cite{brunick2013} to remove the conditions of nondegeneracy and boundedness on the covariance of the It\^o process. Based on this result, as shown in \cite{guo2019calibration}, it is sufficient to look for solutions among such diffusion processes. This allows us to deduce a PDE formulation of the problem along with its dual counterpart. The latter naturally gives rise to Hamilton--Jacobi--Bellman (HJB) equations which can be used to represent the solutions to the original problem. Importantly, being Markovian in the state variables, our calibrated model allows us to easily derive hedging strategies for any other options. Indeed, as long as the covariance matrix is invertible, the model is complete (see \cite{davis2008completeness}) and all derivatives based on $X$ can be fully delta hedged through dynamical trading in the SPX index and variance swaps on it.

In terms of numerical aspects, pricing of VIX derivatives involves evaluating the square root of a conditional expectation. This requires nested Monte Carlo or least square Monte Carlo methods. Nested Monte Carlo has good accuracy, but is computationally expensive. Least square Monte Carlo is efficient, but it is difficult to determine the sign of the error, which can be a useful piece of information in risk management. In the previous work of two authors of this paper \cite{guo2018pricing}, the least square Monte Carlo approach was adapted for computing the duality bounds of VIX derivatives. In this paper, by taking $X^2$ as the forward quadratic variation of $X^1$, we can use conventional Monte Carlo methods or PDE methods to calculate the prices of VIX options and futures. Then, $X$ is calibrated by a gradient descent method proposed in \cite{guo2019calibration}, in which an HJB equation is numerically solved by a fully implicit finite difference method at each iteration. It should be mentioned that a similar numerical algorithm was studied much earlier in \cite{avellaneda1997calibrating} in the context of entropy minimisation. Let us also point out that, by defining suitable state variables, our results are applicable to any calibration problem in which the calibration instruments have payoffs in the form of a function of a conditional expectation.

In fact, the calibration method presented in this paper shares many common features with Guyon's approach \cite{guyon2020joint}. For example, both methods are non-parametric and based on the theory of optimal transport, and both methods suffer from the curse of dimensionality when considering multiple maturities of VIX futures and options. Despite these similarities, there are many important differences as well. On one hand, Guyon's model is fitted to the distributions implied from market SPX and VIX options and futures, and our model is directly calibrated to the market prices of these products. On the other hand, Guyon's method seeks a three-dimensional joint probability measures on SPX and VIX at the start date of VIX and on SPX only on the end date of VIX. Our method recovers the whole trajectory distributions of SPX in a given time interval. We must acknowledge that, compared to Guyon's method, our method is more computationally expensive. We leave the study of reducing the computational complexity for future research.

The paper is organised as follows. Section 2 introduces some basic notations and the formulation of the problem. Section 3 presents the main results including a dimension reduction result, the PDE formulation and the dual formulation. Section 4 describes the numerical method in detail. Finally, in Section 5, we provide numerical examples with both simulated data and market data. 

\section{Problem formulation}
\subsection{Preliminaries}

Let $E$ be a Polish space equipped with its Borel $\sigma$-algebra. We denote $C(E)$ the set of continuous functions on $E$ and $C_b(E)$ the set of bounded continuous functions on $E$. Denote by $\cP(E)$ the set of Borel probability measures endowed with the weak-$*$ topology. Let $BV(E)$ be the set of functions of bounded variation and $L^1(d\mu)$ be the set of $\mu$-integrable functions. We also write $C(E,\R^d), C_b(E,\R^d), BV(E, \R^d)$ and $L^1(d\mu, \R^d)$ for the vector-valued versions of their corresponding sets.

Let $\Omega:=C([0,T],\R^2)$ be the two-dimensional canonical space with the canonical process $X=(X^1, X^2)$, and let $\bF=(\cF_t)_{0\leq t\leq T}$ be the canonical filtration generated by $X$. Denote by $\cP$ the set of Borel probability measures on $(\Omega, \cF_T), T>0$. Let $\cP^0\subset\cP$ denote the subset of measures such that, for each $\bP\in\cP^0$, $X\in\Omega$ is an $(\bF,\bP)$-semimartingale given by
\eq{\label{eq:semimartingale}
  X_t = X_0 + A_t + M_t, \qquad \langle X\rangle_t = \langle M\rangle_t = B_t, \quad \bP\mbox{-a.s.,}
}
where $M$ is an $(\bF,\bP)$-martingale and $(A, B)$ is $\bP$-a.s. absolutely continuous with respect to $t$. In particular, $\bP$ is said to be characterised by $(\alpha^\bP, \beta^\bP)$, which is defined in the following way,
\eqn{
  \alpha^\bP_t=\frac{dA_t}{dt},\quad \beta^\bP_t=\frac{dB_t}{dt}.
}
Note that $(\alpha^\bP,\beta^\bP)$ is $\bF$-adapted and determined up to $d\bP\times dt$, almost everywhere. In general, $(\alpha^\bP,\beta^\bP)$ takes values in the space $\R^2\times\bS^2_+$, where $\bS^2$ is the set of symmetric matrices and $\bS^2_+$ is the set of positive semidefinite matrices of order two. For any $A,B\in\bS^2$, we write $A:B=\operatorname{tr}(A^\intercal B)$. Denote by $\cP^1\subset\cP^0$ a set of probability measures $\bP$ whose characteristics $(\alpha^\bP, \beta^\bP)$ are $\bP$-integrable. In other words,
\eqn{
  \bE^\bP\left(\int_0^T\abs{\alpha_t^\bP} + \abs{\beta_t^\bP}\,dt \right) < +\infty,
}
where $|\cdot|$ is the $L^1$-norm.

Denote by $F:[0,T]\times\R^2\times\R^2\times\bS^2\to\R\cup\{+\infty\}$ a cost function, and denote by $F^*:[0,T]\times\R^2\times\R^2\times\bS^2\to\R\cup\{+\infty\}$ the convex conjugate of $F$ with respect to $(\alpha,\beta)$:
\eqn{
  F^*(t,x,a,b):=\sup_{\alpha\in\R^2,\beta\in\bS^2}\{\alpha\cdot a + \beta:b - F(t,x,\alpha,\beta) \}.
}
When there is no ambiguity, we will simply write $F(\alpha,\beta):=F(t,x,\alpha,\beta)$ and $F^*(a,b):=F^*(t,x,a,b)$.

\subsection{The joint calibration problem}

We are interested in risk-neutral measures under which the SPX price is a continuous martingale, as we assume for simplicity that both dividends and interests rates are null. Let $S_t$ be the SPX price of the form
\eqn{
  S_t = S_0 + \int_0^t\sigma_s S_s\,dW_s,
}
where $\sigma$ is some adapted process and $W$ is a one-dimensional Brownian motion.
It then follows that $X^1_t$, the logarithm of $S_t$, is a semimartingale with dynamics
\eqn{
  X^1_t=X^1_0-\demi\int_0^t\sigma_s^2\,ds + \int_0^t\sigma_s\,dW_s,\quad 0\leq t\leq T.
}
For such $X^1$, we then use $X^2$ to represent a half of the expectation of the forward quadratic variation of $X^1$ on $[t,T]$ observed at time $t$, that is
\eq{\label{eq:2-3-1}
  X^2_t = X^2_{t,T} := \bE^\bP\left(\demi\int_t^T\sigma^2_s\,ds \,\bigg|\, \cF_t\right) = X^1_t - \bE^\bP(X^1_T \mid \cF_t),\quad 0\leq t\leq T.
}
From now on, we will interchangeably use $X^2_t$ for $X^2_{t,T}$ and vice versa, $X^2_{t,T}$ being used to emphasise the dependence of $X^2$ on $T$. Note that the second term on the right-hand side of (\ref{eq:2-3-1}) is the $T$-futures price on $X^1$ at time $t$ and hence is a martingale. It follows that the modelling setting we just described is captured by probability measures $\bP\in\cP^1$ characterised by $(\alpha,\beta)$ such that
\eq{\label{eq:dynamics_characteristics}
    \alpha_t = \left[ \begin{array}{c}
-\demi\sigma_t^2 \\ -\demi\sigma_t^2\end{array} \right] \quad \mbox{and} \quad \beta_t = \left[ \begin{array}{cc}
\sigma_t^2 & (\beta_t)_{12} \\ (\beta_t)_{12} & (\beta_t)_{22} \end{array} \right], \quad 0\leq t\leq T,
}
where $(\beta_t)_{12} = d\langle X^1,X^2\rangle_t \mathbin{/} dt$ and $(\beta_t)_{22} = d\langle X^2\rangle_t \mathbin{/} dt$ and with the additional property that $X^2_{T,T}=0$ $\bP$-a.s. 

\begin{remark}
We note that this is a fully non-parametric description of all the models in $\cP^1$ compatible with the market setting described above. In particular, we do not specify the dynamics of the volatility $(\sigma_t)_{t\leq T}$. In Section \ref{sec:heston}, we show that $X$ may reproduce Heston's stochastic volatility market dynamics. More generally, we believe $X$ may capture the SPX and VIX smiles of a wide range of one-factor stochastic volatility models. However, to capture full model dynamics for other models including multi-factor stochastic volatility models, one would need to add some additional state variables so they can explicitly express $\bE^\bP(X^1_T \mid \cF_t)$ in terms of all state variables, which also increases the dimension of the problem. 
\end{remark}

In order to restrict the probability measures to those characterised by $(\alpha,\beta)$ of the form (\ref{eq:dynamics_characteristics}), we can define a cost function that penalises characteristics that are not in the following convex set:
\eqn{
  \Gamma:= \left\{(\alpha,\beta)\in\R^2\times\bS^2_+ : \alpha_1=\alpha_2=-\demi\beta_{11} \right\}.
}
Define the convex cost function $F$ as follows:
\eq{\label{eq:costfunction}
  F(\alpha,\beta) = \left\{ \begin{array}{ll} \displaystyle\sum_{i,j=1}^2 (\beta_{ij} - \bar\beta_{ij})^{2}   & \mbox{if } (\alpha,\beta)\in\Gamma, \\
		 +\infty & \mbox{otherwise,} \end{array} \right.
}
where $\bar\beta$ is a matrix of some reference values for $\beta$. Note that $\bar\beta$ may depend on $(t,X_t)$. Then, $F$ is finite if and only if $(\alpha,\beta)$ is in the form of (\ref{eq:dynamics_characteristics}). Furthermore, $F$ allows for stability across calibration exercises through specification of a reference model $\bar\beta$. 
Employing $F$ as the cost function, our aim will be to find a model which is the closest to $\bar\beta$ among the ones which calibrate fully to the given market data. We comment further on the significance of $\bar\beta$ below in Section \ref{sec:num_exp}.

The calibration instruments we consider are SPX European options, VIX options and VIX futures. The market prices of these derivatives can be imposed as constraints on $X$. Let $G$ be a vector of $m$ number of SPX option payoff functions\footnote{In the case of non-zero interest rate, the payoff functions in $G$ should be discounted.}. For example, if the $i$-th option is a put option with a strike $K_i$, then the payoff function $G_i:\R^2\to\R_+$ is given by $G_i(x) = \max(K_i-\exp(x_1), 0)$. Let $u^{SPX}\in\R^m$ be the SPX option prices and $\tau\in[0,T]^m$ be the vector of their maturities. The prices $u^{SPX}$ can be imposed on $X$ by restricting $\bP$ to probability measures that satisfy
\eqn{
  \bE^\bP G_i(X_{\tau_i})=u^{SPX}_i,\qquad \forall i=1,\ldots,m.
}

Let $0\leq t_0\leq T$. The annualised realised variance of $S_t=\exp(X^1_t)$ over a time grid $t_0<t_1<\cdots<t_n=T$ is defined to be
\eqn{
  AF\sum_{i=1}^{n}\left(\log\frac{S_{t_i}}{S_{t_{i-1}}}\right)^2,
}
where $AF$ is an annualisation factor. For example, if $t_i$ corresponds to the daily observation dates, then $AF=100^2\times252/n$, and the realised variance is expressed in basis points per annum. As $\sup_{i=1,\ldots,n}|t_i-t_{i-1}|\to 0$, the realised variance can be approximated by the quadratic variation of $X^1_t$, given by 
\eqn{
  AF\sum_{i=1}^n\left(\log\frac{S_{t_i}}{S_{t_{i-1}}}\right)^2 \overset{\bP}{\to} \frac{100^2}{T-t_0}\int_{t_0}^T\sigma_t^2\,dt.
}
The CBOE VIX index at $t_0$ is defined as the square root of a weighted average of out-of-money SPX call and put option prices with maturity $T=t_0+30$ days, which is an approximation of the implied volatility of a 30-day log-contract on the SPX. For models with continuous paths, the VIX index at $t_0$ can be expressed as the square root of the expected realised variance over the next 30 days (see \cite{dupire1993arbitrage} and \cite{neuberger1994log}), that is
\eqn{
  VIX_{t_0} &= 100\sqrt{\frac{2}{T-t_0}\bE^\bP\bigg( \demi\int_{t_0}^T\sigma_t^2\,dt \,\bigg|\, \cF_{t_0} \bigg)} = 100\sqrt{\frac{2}{T-t_0}X^2_{t_0,T}}.
}

Consider VIX options and futures both with maturity $t_0$. Let $H$ be a vector of $n$ number of VIX option payoff functions. Similarly to $G$, if the $i$-th VIX option is a put option with a strike $K_i$, then the payoff function $H_i:\R\to\R_+$ is given by $H_i(x)=\max(K_i-x,0)$. Let $J:\R^2\to\R_+$ be given by $J(x) := 100\sqrt{2x_2/(T-t_0)}$. Let $u^{VIX,f}\in\R$ be the VIX futures price and let $u^{VIX}\in\R^n$ be the VIX option prices. Then, we want to further restrict $\bP$ to those under which $X$ also satisfies the following constraints:
\eqn{
  \bE^\bP J(X_{t_0}) &= u^{VIX,f}, \\
  \bE^\bP (H_i\circ J)(X_{t_0}) &= u_i^{VIX},\qquad \forall i=1,\ldots,n.
}

Finally, to ensure that $X^2_{T,T}=0$, one additional constraint is imposed on the model. Let $\xi:\R^2\to\R_+$ be a function such that $\xi(x) = 0$ if and only if $x_2=0$. Here, we choose $\xi(x):= 1-\exp(-(x_2)^2)$ and add constraint $\bE^\bP\xi(X_T)=0$. This constraint can be interpreted as a contract that has payoff $\xi(X_T)$ at time $T$, and its price is always null. From now on, we call it the \emph{singular contract}. 

We assume that $X_0=(X^1_0, X^2_{0,T})\in\R^2$ is known, and the initial marginal of $X$ is a Dirac measure on $X_0$. The value of $X^1_0$ is the logarithm of the current SPX price. In practice, $X^2_{0,T}$ can be inferred if the market prices of SPX call and put options maturing at $T$ are available over a continuous spectrum of strikes:
\eqn{
  X^2_{0,T} = \bE^\bP\left(\demi\int_0^T\sigma^2_s\,ds\right) = \int_0^{\hat{f}_T}\frac{\bE^\bP(k-S_T)^+}{k^2}\,dk + \int_{\hat{f}_T}^\infty\frac{\bE^\bP(S_T-k)^+}{k^2}\,dk,
}
where $\hat{f}_T=\bE^\bP (S_T)$ is the $T$-forward price of the SPX index (e.g., see \cite{carr1998volatility}). If $X^2_{0,T}$ is not observable from the market, we can treat it as a parameter. Now, putting all the constraints together, we define a set of probability measures $\cP(X_0, G, H, \tau, t_0, T, u^{SPX}, u^{VIX,f},u^{VIX})\subset\cP^1$ as follows:
\eqn{
  \cP(X_0, G, H, \tau, t_0, T, u^{SPX}, u^{VIX,f},u^{VIX}):=\{ \bP\in\cP^1 : \bP\circ X_0^{-1}&=\delta_{X_0}, \\
        \bE^\bP G_i(X_{\tau_i})&=u^{SPX}_i,\, i=1,\ldots,m,\\
        \bE^\bP J(X_{t_0})&=u^{VIX,f}, \\
        \bE^\bP (H_i\circ J)(X_{t_0}) &= u_i^{VIX},\, i=1,\ldots,n,\\
        \bE^\bP\xi(X_T)&=0 \}.
}
For simplicity, we write $\cP_{joint}$ as a shorthand for $\cP(X_0, G, H, \tau, t_0, T, u^{SPX}, u^{VIX,f},u^{VIX})$. Any $\bP\in\cP_{joint}$ is a feasible risk-neutral measure under which the semimartingale $X$ reproduces the market prices. If $\cP_{joint}$ is empty, it means that the market data is not compatible with a continuous-time semimartingale model with continuous paths. Adopting the convention $\inf\emptyset=+\infty$, we formulate the joint calibration problem as a \emph{semimartingale optimal transport problem under a finite number of discrete constraints}, as studied in \cite{guo2019calibration}:
\begin{problem}\label{pb:main}
Given $X_0, G, H, \tau, t_0,,T,u^{SPX}, u^{VIX,f}$ and $u^{VIX}$, solve
\eq{\label{eq:obj_prob1}
  V:=\inf_{\bP\in\cP_{joint}} \bE^\bP\int_0^T F(\alpha_s^\bP, \beta_s^\bP)\,ds.
}
The problem is said to be \emph{admissible} if the infimum is finite and, in particular, $\cP_{joint}$ is nonempty.
\end{problem}

\begin{remark}
Let $Y$ be an $\cF_T$-measurable random variable. By identifying $X^2_t$ as a function of $X^1_t$ and $\bE^\bP(Y\mid \cF_t)$, our results apply to any model calibration problem where the payoffs of the calibration instruments can be expressed as functions of $X^1_t$ and $X^2_t$.
\end{remark}
\begin{remark}
  When considering multiple maturities for VIX futures and options, we need to have one $X^2$ for each maturity, e.g., $X^2_{t,T_1}$, $X^2_{t,T_2}$, etc. Although there is no theoretical limitation for considering multiple maturities, from numerical and practical standpoints this is challenging as each additional maturity increases the PDE's dimension.
\end{remark}

\subsection{An example: the Heston model}\label{sec:heston}

The Heston model \cite{heston1993closed} is a one-factor stochastic volatility model which directly models the spot price $S_t$ and the instantaneous variance $\nu_t$ under the risk-neutral measure. The model dynamics are given by
\eqn{
  dS_t &= \sqrt{\nu_t}S_t\,dW^1_t,  \\
  d\nu_t &= -\kappa(\nu_t - \theta)\,dt + \omega\sqrt{\nu_t}\,dW^2_t,\\
  \langle dW^1,dW^2\rangle_t &= \eta\,dt,
}
where $W^1_t$ and $W^2_t$ are standard Brownian motions with correlation $\eta$ and $\kappa, \theta > 0 $ with $2\kappa\theta>\omega^2$ so that $\nu_t>0$ a.s. In this section, we rewrite the Heston dynamics in terms of $X^1_t$ and $X^2_{t,T}$ and hence specify the probability measure $\bP\in\cP^1$ which captures the Heston dynamics. 

For $X^1$, it is obvious that $dX^1_t=d\log(S_t)=-\demi \nu_t\,dt + \sqrt{\nu_t}\,dW^1_t$. For $X^2$, by applying It\^o's formula, we have
\eq{\label{eq:heston_X2}
  X^2_{t,T} = \bE^\bP\left(\demi\int_t^T \nu_s\,ds\bigg|\cF_t\right) = \frac{1-e^{-\kappa(T-t)}}{2\kappa}(\nu_t-\theta) + \demi\theta(T-t).
}
Define $A(t,\kappa):=(1-e^{-\kappa(T-t)})/\kappa$, then a simple rearrangement of (\ref{eq:heston_X2}) gives that
\eqn{
  \nu_t = A(t,\kappa)^{-1}(2X^2_{t,T} - \theta(T-t)) + \theta =: \nu(t,X^2_{t,T},\kappa,\theta).
}
The above equation establishes a one-to-one relation between $\nu_t$ and $X^2_{t,T}$ at time $t$. Applying It\^o's formula to $X^2_{t,T}$, we have
\eqn{
  dX^2_{t,T} &= d\left(\demi A(t,\kappa)(\nu_t-\theta) + \demi\theta(T-t) \right) \\
    &= \demi(\nu_t-\theta)\,dA(t,\kappa) + \demi A(t,\kappa)\,d\nu_t - \demi\theta\,dt \\
    &= \left( \demi(\nu_t-\theta)(\kappa A(t,\kappa) - 1) - \demi\kappa A(t,\kappa)(\nu_t-\theta) - \demi\theta \right)dt + \demi A(t,\kappa)\omega\sqrt{\nu_t}\,dW^2_t \\
    &= -\demi \nu_t\,dt + \demi A(t,\kappa)\omega\sqrt{\nu_t}\,dW^2_t.
}
Therefore, the Heston model can be reformulated as
\eqn{
  dX^1_t &= -\demi \nu(t,X^2_{t,T},\kappa,\theta)\,dt + \sqrt{\nu(t,X^2_{t,T},\kappa,\theta)}\,dW^1_t, \\
  dX^2_{t,T} &= -\demi \nu(t,X^2_{t,T},\kappa,\theta)\,dt + \demi A(t,\kappa)\omega\sqrt{\nu(t,X^2_{t,T},\kappa,\theta)}\,dW^2_t,\\
  \langle dW^1_t,dW^2_t\rangle &= \eta\,dt.
}
This dynamics can be captured by the probability measure $\bP\in\cP^0$ characterised by $(\alpha,\beta)$ such that, for $t\in[0,T]$, 
\eq{\label{eq:heston_dynamics_characteristics}
    (\alpha_t,\beta_t) = \left( \left[ \begin{array}{c} -\demi \nu(t,X^2_{t,T},\kappa,\theta) \\ -\demi \nu(t,X^2_{t,T},\kappa,\theta) \end{array} \right],
    \left[ \begin{array}{cc} \nu(t,X^2_{t,T},\kappa,\theta) & \demi\eta\omega A(t,\kappa)\nu(t,X^2_{t,T},\kappa,\theta) \\ \demi\eta\omega A(t,\kappa)\nu(t,X^2_{t,T},\kappa,\theta) & \frac{1}{4}\omega^2 A(t,\kappa)^2 \nu(t,X^2_{t,T},\kappa,\theta) \end{array} \right] \right).
}
Further, it is easy to check that $\bE^\bP \int_0^T \nu(t,X^2_{t,T},\kappa,\theta)\,dt<\infty$ and hence $\bP\in \cP^1$. The characteristics (\ref{eq:heston_dynamics_characteristics}) will be used in the numerical example provided in Section \ref{sec:num_exp} for generating simulated option prices and will also be used as a reference model.

\section{Main results}

This section is devoted to presenting our main results. By following \cite{guo2019calibration}, we first present a dimension reduction result which shows that the optimal transportation cost can be achieved by a set of Markov processes. Focusing only on these Markov processes, we introduce a PDE formulation. Furthermore, we deduce a dual formulation and find the optimal characteristics as a by-product.

\subsection{Dimension reduction}

In this section, we show that if Problem \ref{pb:main} is admissible then the optimal transportation cost $V$ can be found by minimising (\ref{eq:obj_prob1}) over a subset of probability measures under which $X$ is a (time inhomogeneous) Markov processes. Before proceeding, we introduce some notations for brevity. Denote by $\bE^\bP_{t,x}$ the conditional expectation $\bE^\bP(\,\cdot\mid X_t=x)$. For any square matrix $\beta\in\bS^2_+$, we write $\beta^\demi$ such that $\beta = \beta^\demi(\beta^\demi)^\intercal$. Now, let us restate Lemma 3.1 of \cite{guo2019calibration}. 

\begin{lemma}\label{lemma:3.4}
Let $\bP\in\cP^1$ and $\rho^\bP_t=\rho^\bP(t,\cdot) = \bP\circ X_t^{-1}$ be the marginal distribution of $X_t$ under $\bP$, $t\leq T$. Then $\rho^\bP$ is a weak solution to the Fokker--Planck equation:
\eq{\label{eq:fokker-planck}
\left\{\begin{array}{r@{\ }c@{\ }l@{\ }l} \displaystyle
  \dt\rho^\bP_t + \Dx\cdot(\rho^\bP_t\bE^\bP_{t,x}\alpha^\bP_t) - \demi\sum_{i,j}\partial_{i j}(\rho^\bP_t(\bE^\bP_{t,x}\beta^\bP_t)_{ij}) &=& 0  & \quad\mbox{in }[0, T]\times\R^2, \\
  \rho^\bP_0 &=& \delta_{X_0} & \quad\mbox{in }\R^2.
\end{array}\right.
}
Moreover, there exists another probability measure $\bP'\in\cP^1$ under which $X$ has the same marginals, $\rho^{\bP'}=\rho^{\bP}$, and is a Markov process solving 
\eq{\label{eq:localised-sde}
 dX_t = \alpha^{\bP'}(t,X_t)dt + (\beta^{\bP'}(t,X_t))^\demi\,dW_t^{\bP'},\quad 0\leq t\leq T,
}
where $W^{\bP'}$ is a $\bP'$-Brownian motion, $\alpha^{\bP'}(t,X_t)=\bE^\bP_{t,X_t}\alpha^\bP_t$ and $\beta^{\bP'}(t,X_t)=\bE^\bP_{t,X_t}\beta^\bP_t$.
\end{lemma}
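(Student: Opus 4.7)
The plan is to first derive the Fokker--Planck equation \eqref{eq:fokker-planck} directly from the semimartingale decomposition of $X$ under $\bP$, and then invoke the superposition principle of Trevisan to build the Markovian $\bP'$.

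For the first part, I would fix a test function $\phi\in C_c^\infty(\R^2)$ and apply It\^o's formula to $\phi(X_t)$ under $\bP$, using $dX_t=\alpha^\bP_t\,dt+dM_t$ with $d\langle M\rangle_t=\beta^\bP_t\,dt$. Taking $\bE^\bP$ kills the martingale term (the $\bP^1$-integrability of $(\alpha^\bP,\beta^\bP)$ is used here for the needed local-martingale-to-martingale upgrade against compactly supported smooth $\phi$). Then I would use the tower property to rewrite
\eqn{
\bE^\bP[\Dx\phi(X_t)\cdot\alpha^\bP_t]=\bE^\bP[\Dx\phi(X_t)\cdot\bE^\bP_{t,X_t}\alpha^\bP_t]=\int\Dx\phi(x)\cdot\bE^\bP_{t,x}\alpha^\bP_t\,\rho^\bP_t(dx),
}
and similarly for the $\beta^\bP$ term. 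This yields the weak form of \eqref{eq:fokker-planck}, with initial condition $\rho^\bP_0=\delta_{X_0}$ inherited from the assumption on $\bP$.

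For the second part, I set $\widetilde\alpha(t,x):=\bE^\bP_{t,x}\alpha^\bP_t$ and $\widetilde\beta(t,x):=\bE^\bP_{t,x}\beta^\bP_t$, which are Borel, $\R^2$- and $\bS^2_+$-valued respectively (positivity of $\widetilde\beta$ follows pointwise from positivity of $\beta^\bP_t$). The curve $(\rho^\bP_t)_{t\in[0,T]}$ is then, by the first part, a weak solution of the Fokker--Planck equation driven by $(\widetilde\alpha,\widetilde\beta)$, and the integrability
\eqn{
\int_0^T\!\!\int_{\R^2}\bigl(|\widetilde\alpha(t,x)|+|\widetilde\beta(t,x)|\bigr)\rho^\bP_t(dx)\,dt\le\bE^\bP\!\int_0^T\bigl(|\alpha^\bP_t|+|\beta^\bP_t|\bigr)dt<\infty
}
holds by Jensen's inequality and $\bP\in\cP^1$. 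The superposition principle of Trevisan (an extension of Figalli for bounded coefficients) then produces a probability measure $\bP'$ on $\Omega$ under which the canonical process is a weak solution of the martingale problem for the operator $\widetilde\alpha\cdot\nabla+\tfrac12\widetilde\beta:\nabla^2$ with initial law $\delta_{X_0}$, and whose one-dimensional marginals coincide with $\rho^\bP_t$ for every $t\in[0,T]$. Equivalently, $X$ satisfies the SDE \eqref{eq:localised-sde} with a $\bP'$-Brownian motion $W^{\bP'}$ constructed from the martingale part via the standard orthogonal-decomposition/extension argument (using $\widetilde\beta^{1/2}$). Because the coefficients of this SDE depend on $(t,X_t)$ only, $X$ is (time-inhomogeneous) Markov under $\bP'$, and its characteristics are $(\alpha^{\bP'},\beta^{\bP'})=(\widetilde\alpha,\widetilde\beta)$; the integrability displayed above yields $\bP'\in\cP^1$.

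The main obstacle is the invocation of the superposition principle: one must verify that the marginal flow $(\rho^\bP_t)$ together with the mimicking coefficients $(\widetilde\alpha,\widetilde\beta)$ falls within the hypotheses of Trevisan's theorem, in particular the $L^1(d\rho^\bP_t\,dt)$ integrability of the coefficients and the weak-continuity of $t\mapsto\rho^\bP_t$ in $\cP(\R^2)$. Both are, however, immediate consequences of $\bP\in\cP^1$ and of $X$ having continuous paths. The remainder (construction of $W^{\bP'}$, Markovianity, marginal identification) is standard once the measure $\bP'$ is produced, and the argument is essentially the same as in \cite{guo2019calibration}, to which I would refer for the technical details.
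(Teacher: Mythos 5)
Your proposal follows exactly the paper's intended route: the paper states the lemma and then remarks that it ``is a corollary of the superposition principle of Trevisan \cite{Trevisan2016existence} and Figalli \cite{Figalli2008existence}'', deferring the details to Lemma~3.1 of \cite{guo2019calibration}; your It\^o/tower-property derivation of the weak Fokker--Planck equation, the Jensen-type integrability check for $(\widetilde\alpha,\widetilde\beta)$ against $d\rho^\bP_t\,dt$, and the invocation of Trevisan's theorem are precisely what those references carry out. One loose sentence worth flagging: ``because the coefficients of this SDE depend on $(t,X_t)$ only, $X$ is Markov under $\bP'$'' is not a valid inference on its own --- a weak solution of an SDE with merely measurable $(t,x)$-dependent coefficients need not be Markov absent well-posedness of the martingale problem or a Markov selection argument --- but since you explicitly defer the technical details to \cite{guo2019calibration}, where this is handled, this is a presentational rather than a substantive gap, and your overall approach coincides with the paper's.
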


Lemma \ref{lemma:3.4} is a corollary of the superposition principle of Trevisan \cite{Trevisan2016existence} and Figalli \cite{Figalli2008existence}. It is worth noting that the idea of using diffusion processes to mimic an It\^o process by matching their marginals at fixed times (also called Markovian projection in the literature) traces back to the classical mimicking theorem of Gy\"ongy \cite{gyongy1986mimicking}, which was later extended by Brunick and Shreve \cite{brunick2013} to remove the conditions of nondegeneracy and boundedness on the covariance of the It\^o process.

Let $\cP_{joint}^{loc}\subset\cP_{joint}$ be the subset of probability measures under which $X$ is Markov processes in the form of (\ref{eq:localised-sde}). In other words, any $\bP'\in\cP_{joint}^{loc}$ is characterised by $(\alpha^{\bP'}(t,X_t),\beta^{\bP'}(t,X_t)):=(\bE^{\bP}_{t,X_t}\alpha^{\bP}_t,\bE^{\bP}_{t,X_t}\beta^{\bP}_t)$ for some $\bP\in\cP^1$. Moreover, under $\bP'$, $X$ has an initial marginal $\delta_{X_0}$ and is fully calibrated to the market prices given in $\cP_{joint}$. Applying Proposition 3.4 of \cite{guo2019calibration}, we have the following proposition for the joint calibration problem:

\begin{proposition}[Dimension reduction]\label{prop:localisation}
Given $\cP_{joint}$ and $\cP_{joint}^{loc}$, if Problem \ref{pb:main} is admissible, then
\eqn{
  V = \inf_{\bP\in\cP_{joint}}\bE^\bP \int_0^{T} F(\alpha_t^\bP,\beta_t^\bP) \,dt = \inf_{\bP\in\cP_{joint}^{loc}} \bE^\bP\int_0^{T} F(\alpha^\bP(t,X_t),\beta^\bP(t,X_t))\,dt.
}
\end{proposition}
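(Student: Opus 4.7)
The plan is to establish the two inequalities between the infima separately. The inequality $V \leq \inf_{\bP \in \cP_{joint}^{loc}} \bE^\bP\!\int_0^T F(\alpha^\bP(t,X_t),\beta^\bP(t,X_t))\,dt$ is immediate from the inclusion $\cP_{joint}^{loc} \subseteq \cP_{joint}$: any $\bP \in \cP_{joint}^{loc}$ is in particular a member of $\cP_{joint}$ whose generic characteristics $(\alpha^\bP_t,\beta^\bP_t)$ are just the evaluations $(\alpha^\bP(t,X_t),\beta^\bP(t,X_t))$ of the Markovian coefficients. All the work is therefore in the reverse direction, which is where Lemma \ref{lemma:3.4} and the convexity of $F$ come in.

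For the nontrivial inequality, I would fix an arbitrary $\bP \in \cP_{joint}$ (which exists by admissibility) and apply Lemma \ref{lemma:3.4} to obtain a Markov measure $\bP' \in \cP^1$ with the same one-dimensional marginals $\rho^{\bP'}_t = \rho^\bP_t =: \rho_t$ and with characteristics $\alpha^{\bP'}(t,X_t) = \bE^\bP_{t,X_t}\alpha^\bP_t$, $\beta^{\bP'}(t,X_t) = \bE^\bP_{t,X_t}\beta^\bP_t$. The first step is to check that $\bP' \in \cP_{joint}^{loc}$: every defining constraint of $\cP_{joint}$ — the initial marginal $\delta_{X_0}$, the SPX option constraints $\bE^\bP G_i(X_{\tau_i}) = u^{SPX}_i$, the VIX futures and option constraints $\bE^\bP J(X_{t_0}) = u^{VIX,f}$ and $\bE^\bP(H_i\circ J)(X_{t_0}) = u^{VIX}_i$, and the singular contract $\bE^\bP \xi(X_T) = 0$ — is an expectation of a Borel function of $X_t$ at a single time $t$, so it is preserved under any measure sharing the same marginals.

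The second step is to compare costs. Expressing the $\bP'$-expectation as an integral against the marginal, I would write
\eqn{
\bE^{\bP'}\!\int_0^T F(\alpha^{\bP'}(t,X_t),\beta^{\bP'}(t,X_t))\,dt = \int_0^T\!\!\int_{\R^2} F\bigl(\bE^\bP_{t,x}\alpha^\bP_t,\bE^\bP_{t,x}\beta^\bP_t\bigr)\,\rho_t(dx)\,dt.
}
Since $F$ defined in \eqref{eq:costfunction} is convex in $(\alpha,\beta)$ (the effective domain $\Gamma$ is convex and on it $F$ is a quadratic form in $\beta$), the conditional Jensen inequality gives $F(\bE^\bP_{t,x}\alpha^\bP_t,\bE^\bP_{t,x}\beta^\bP_t) \leq \bE^\bP_{t,x} F(\alpha^\bP_t,\beta^\bP_t)$; integrating against $\rho_t$ and using the tower property yields $\bE^{\bP'}\!\int_0^T F(\alpha^{\bP'},\beta^{\bP'})\,dt \leq \bE^\bP\!\int_0^T F(\alpha^\bP_t,\beta^\bP_t)\,dt$. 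Taking the infimum on the right over $\bP \in \cP_{joint}$ delivers the missing inequality.

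The main obstacle I anticipate is purely technical: making rigorous sense of the conditional expectations $\bE^\bP_{t,x}\alpha^\bP_t$ and $\bE^\bP_{t,x}\beta^\bP_t$ as jointly Borel-measurable functions of $(t,x)$, and verifying the $\bP'$-integrability in the definition of $\cP^1$. These measurability and integrability issues, however, are exactly what Lemma \ref{lemma:3.4} (an application of the Trevisan–Figalli superposition principle) already supplies, so the argument essentially reduces to invoking that lemma, checking the single-time structure of the constraints, and applying Jensen — mirroring Proposition 3.4 of \cite{guo2019calibration}.
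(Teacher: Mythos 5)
Your proposal is correct and follows precisely the approach the paper relies on by citing Proposition 3.4 of Guo--Loeper--Wang (2019): one inequality is trivial from $\cP_{joint}^{loc}\subseteq\cP_{joint}$, and the other combines Lemma \ref{lemma:3.4} (superposition/mimicking) with the observation that all calibration constraints depend only on the one-dimensional marginals and a conditional Jensen inequality using the convexity of $F$. The paper itself gives no inline proof and defers to the cited reference, so you have effectively reconstructed that argument faithfully.
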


\subsection{PDE formulation}

For any $\bP\in\cP_{joint}^{loc}$, the characteristics are function of the state variable $X_t$ and time $t$. As is classical in the theory of diffusions, this allows us to leverage PDE methods to describe Problem \ref{pb:main} and to use conventional numerical methods to find its solutions. 

\begin{proposition}\label{prop:pde}
If Problem \ref{pb:main} is admissible, then
\eq{\label{eq:obj_prob2}
  V=\inf_{\rho,\alpha,\beta}\int_0^{T}\int_{\R^2} F(\alpha(t,x),\beta(t,x))\,\rho(t,dx)\,dt,
}
among all $(\rho,\alpha,\beta)\in C([0,T],\cP(\R^2))\times L^1(d\rho_t dt,\R^2)\times L^1(d\rho_t dt,\bS_+^2)$ satisfying the following constraints in the sense of distributions:
\eq{
  \dt\rho(t,x) + \Dx\cdot(\rho(t,x)\alpha(t,x)) - \demi\sum_{i,j}\partial_{i j}(\rho(t,x)\beta_{ij}(t,x)) &= 0 \quad \mbox{in } [0,T]\times\R^2, \label{cons:fokker-planck}\\
  \int_{\R^2} G_i(x)\,\rho(\tau_i,dx) &= u^{SPX}_i \quad \forall i=1,\ldots,m, \label{cons:spx}\\
  \int_{\R^2} J(x)\,\rho(t_0,dx) &= u^{VIX,f}, \label{cons:vix_f}\\
  \int_{\R^2} (H_i\circ J)(x)\,\rho(t_0,dx) &= u^{VIX}_i \quad \forall i=1,\ldots,n, \label{cons:vix}\\
  \int_{\R^2} \xi(x)\,\rho(T,dx) &= 0, \label{cons:singular}
}
and the initial condition $\rho(0,\cdot) = \delta_{X_0}$.
\end{proposition}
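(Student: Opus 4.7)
\medskip

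\noindent\textbf{Proof proposal.} The plan is to prove the two inequalities $V \geq \mathrm{(RHS)}$ and $V \leq \mathrm{(RHS)}$ separately, where RHS denotes the infimum in (\ref{eq:obj_prob2}). The key device on both sides is the correspondence between probability measures in $\cP_{joint}^{loc}$ and triples $(\rho,\alpha,\beta)$ solving the Fokker--Planck equation, as provided by Lemma \ref{lemma:3.4}.

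For the inequality $V \geq \mathrm{(RHS)}$, I would start by invoking Proposition \ref{prop:localisation}, which lets me restrict the infimum in the definition of $V$ to $\cP_{joint}^{loc}$. For any $\bP \in \cP_{joint}^{loc}$ the characteristics $(\alpha^\bP,\beta^\bP)$ are functions of $(t,X_t)$, and Lemma \ref{lemma:3.4} tells me that the marginal flow $\rho^\bP_t = \bP \circ X_t^{-1}$ is a distributional solution of the Fokker--Planck equation (\ref{cons:fokker-planck}). The remaining constraints (\ref{cons:spx})--(\ref{cons:singular}) hold because they depend on $\bP$ only through its marginals $\rho^\bP_{\tau_i}$, $\rho^\bP_{t_0}$, $\rho^\bP_T$, and these identities rewrite exactly the $\bE^\bP$-constraints defining $\cP_{joint}$. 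The initial condition $\rho^\bP(0,\cdot) = \delta_{X_0}$ is automatic. Regularity-wise, $\rho^\bP \in C([0,T],\cP(\R^2))$ follows from pathwise continuity of $X$, while $(\alpha^\bP,\beta^\bP) \in L^1(d\rho^\bP_t\,dt)$ is exactly the $\bP$-integrability condition defining $\cP^1$. The cost functional evaluated on $(\rho^\bP,\alpha^\bP,\beta^\bP)$ coincides with $\bE^\bP \int_0^T F(\alpha^\bP_t,\beta^\bP_t)\,dt$ by Fubini, so the infimum on the right is no larger than $V$.

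For the converse $V \leq \mathrm{(RHS)}$, I would take an arbitrary admissible triple $(\rho,\alpha,\beta)$. The main step is to lift this PDE solution back to a probability measure on $\Omega$. This is precisely Trevisan's superposition principle (extending Figalli for the bounded case), which under the integrability $\alpha,\beta \in L^1(d\rho_t dt)$ produces a $\bP \in \cP^0$ such that $X$ is a semimartingale with characteristics $(\alpha(t,X_t),\beta(t,X_t))$ and marginals $\bP \circ X_t^{-1} = \rho_t$ for all $t$. The $L^1$ hypothesis immediately gives $\bP \in \cP^1$. Since the market constraints (\ref{cons:spx})--(\ref{cons:singular}) depend on $\bP$ only through its fixed-time marginals $\rho_{\tau_i}, \rho_{t_0}, \rho_T$, they are inherited from the triple, and $\rho_0 = \delta_{X_0}$ gives $\bP \circ X_0^{-1} = \delta_{X_0}$. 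Hence $\bP \in \cP_{joint}$, and its cost equals the cost of the triple, yielding the desired inequality.

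The main delicate point is the converse direction, specifically checking that the Trevisan superposition output has precisely the pointwise characteristics $\alpha(t,X_t),\beta(t,X_t)$ (rather than conditional expectations of them), so that plugging into $F$ gives the same value. This is built into the Markovian version of the superposition principle used in Lemma \ref{lemma:3.4}: because $(\alpha,\beta)$ are already deterministic functions of $(t,x)$, no averaging is needed and the cost is preserved exactly. The remaining work is routine: verifying that weak-PDE admissibility plus the singular-contract constraint $\int \xi\,d\rho_T = 0$ forces $X^2_{T,T}=0$ $\bP$-a.s., which is immediate from $\xi \geq 0$ and $\xi(x)=0 \Leftrightarrow x_2=0$.
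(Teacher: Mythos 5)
Your proof is correct and follows essentially the same route as the paper's one-line proof: both directions of the equality rest on the superposition principle encapsulated in Lemma~\ref{lemma:3.4}, together with Fubini for the cost identity and the weak time-continuity of the marginal flow. You simply unfold what the paper leaves implicit; routing the $V\geq\mathrm{RHS}$ direction through Proposition~\ref{prop:localisation} rather than invoking Jensen's inequality directly on a general $\bP\in\cP_{joint}$ is a harmless reorganization of where the convexity of $F$ enters.
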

\begin{proof}
This proposition follows immediately from Lemma \ref{lemma:3.4}. The interchange of integrals in the objective is justified by Fubini's theorem. For the weak continuity of
measure $\rho$ in time we refer the reader to \cite{loeper2006}.
\end{proof}

The PDE formulation can be solved by the alternating direction method of multipliers (ADMM) which was originally used in \cite{benamou-brenier2000} to solve the classical optimal transport. This method was extended to a one-dimensional martingale optimal transport problem in \cite{guo2017local} and to instationary mean field games with diffusion in \cite{Andreev2017mfg}. However, for problems with diffusions, the ADMM method requires to solve a fourth-order PDE with a bi-Laplacian operator. In this paper, we work on an alternative dual formulation derived by following the arguments in \cite{guo2019calibration}. This will be presented in the next subsection.

\subsection{Dual formulation}

Although the PDE formulation is not a convex problem, it can be made convex by considering the triple of measures $(\rho,\cA,\cB):= (\rho,\rho\alpha,\rho\beta)$. By doing so, the objective function (\ref{eq:obj_prob2}) is convex in $(\rho,\cA,\cB)$. Moreover, the initial condition and the constraints (\ref{cons:fokker-planck}) to (\ref{cons:singular}) are linear in $(\rho,\cA,\cB)$ and hence produce a convex feasible set. In consequence, the classical tools of convex analysis can be applied. Following Proposition 3.5 of \cite{guo2019calibration}, we introduce a dual formulation.

Let $\lambda^{SPX}\in\R^m$, $\lambda^{VIX,f}\in\R$, $\lambda^{VIX}\in\R^n$ and $\lambda^{\xi}\in\R$ be the Lagrange multipliers of constraints (\ref{cons:spx}) to (\ref{cons:singular}), respectively. To avoid confusion with the Dirac measure $\delta:\R^2\to\R\cup\{+\infty\}$ used previously, we denote by ${\cal D}:[0,T]\to\R\cup\{+\infty\}$ the Dirac delta function in the sense of distributions. The dual formulation is given as follows:
\begin{theorem}[Duality]\label{thm:duality}
If Problem \ref{pb:main} is admissible, we have
\eq{\label{eq:dual_obj_thm}
  V=\sup_{(\lambda^{SPX},\lambda^{VIX,f},\lambda^{VIX},\lambda^{\xi})\in\R^{m+n+2}} \lambda^{SPX}\cdot u^{SPX} + \lambda^{VIX,f} u^{VIX,f} + \lambda^{VIX}\cdot u^{VIX} - \phi(0, X_0),
}
where $\phi$ is the viscosity solution to the HJB equation:
\eq{\label{eq:hjb}
\begin{split}
  \dt\phi(t,x) &+ F^*(\Dx\phi(t,x),\demi\Dxx\phi(t,x)) = -\sum_{i=1}^m \lambda^{SPX}_i G_i(x) {\cal D}(t-\tau_i) \\
  &- \lambda^{VIX,f} J(x) {\cal D}(t-t_0) - \sum_{i=1}^n \lambda^{VIX}_i (H_i\circ J)(x) {\cal D}(t-t_0) - \lambda^\xi\xi(x) {\cal D}(t-T) \quad \mbox{in } [0,T]\times\R^2,
\end{split}
}
with the terminal condition $\phi(T,\cdot)=0$. Moreover, if Problem \ref{pb:main} is admissible, then the infimum in (\ref{eq:obj_prob2}) is attained. If the supremum in \eqref{eq:dual_obj_thm} is attained by some $\lambda^{SPX}$, $\lambda^{VIX,f}$, $\lambda^{VIX}$ and $\lambda^{\xi}$ for which the associated  solution to (\ref{eq:hjb}) is $\phi^*\in BV([0,T],C_b^2(\R^2))$, and if $(\rho,\alpha,\beta)$ is an optimal solution of Problem \ref{pb:main}, then $(\alpha,\beta)$ is given by
\eq{
  (\alpha_t, \beta_t) = \nabla F^*(\Dx\phi^*(t,\cdot), \demi\Dxx\phi^*(t,\cdot)),\quad d\rho_t dt-\mbox{almost everywhere.}
}
\end{theorem}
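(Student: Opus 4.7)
The plan is to follow the classical Fenchel--Rockafellar / min--max approach after convexifying the primal via the substitution $(\cA,\cB):=(\rho\alpha,\rho\beta)$, exactly as in the proof of Proposition~3.5 in \cite{guo2019calibration}. First I would introduce a smooth test function $\phi:[0,T]\times\R^2\to\R$ with $\phi(T,\cdot)=0$ as a Lagrange multiplier for the Fokker--Planck constraint \eqref{cons:fokker-planck}, and scalars $\lambda^{SPX},\lambda^{VIX,f},\lambda^{VIX},\lambda^\xi$ for the discrete constraints \eqref{cons:spx}--\eqref{cons:singular}. Integrating the Fokker--Planck term against $\phi$ by parts in space and time produces the boundary contribution $-\phi(0,X_0)$ (using the initial Dirac marginal) together with an integrand
\eqn{
  -\rho\left(\dt\phi + \alpha\cdot\Dx\phi + \tfrac12\beta:\Dxx\phi\right),
}
while the discrete constraints contribute boundary-in-time terms that formally appear as Dirac source terms for $\phi$ at the instants $\tau_i,t_0,T$. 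Grouping everything, the Lagrangian is affine in $(\cA,\cB)$ and convex in $\rho$.

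Next I would exchange $\inf$ and $\sup$ and compute the inner infimum. Minimising pointwise over $(\alpha,\beta)$ with $\rho$ fixed gives, by the definition of the convex conjugate, $-\rho F^*(\Dx\phi,\tfrac12\Dxx\phi)$. Demanding that the resulting functional of $\rho$ be bounded below then forces
\eqn{
  \dt\phi + F^*(\Dx\phi,\tfrac12\Dxx\phi) + \sum_i\lambda^{SPX}_i G_i\,{\cal D}(t-\tau_i) + \lambda^{VIX,f} J\,{\cal D}(t-t_0) + \sum_i\lambda^{VIX}_i(H_i\circ J)\,{\cal D}(t-t_0) + \lambda^\xi \xi\,{\cal D}(t-T) \le 0,
}
and the supremum over $\phi$ is attained when equality holds, i.e.\ when $\phi$ is the (viscosity) solution of \eqref{eq:hjb} with terminal condition $\phi(T,\cdot)=0$. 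The Dirac sources are handled by interpreting $\phi$ as piecewise continuous in time on $[0,T]\setminus\{\tau_1,\dots,\tau_m,t_0,T\}$ with prescribed jumps, e.g.\ $\phi(\tau_i^-,x)-\phi(\tau_i^+,x)=\lambda^{SPX}_i G_i(x)$, and likewise at $t_0$ and $T$; between two consecutive singular times $\phi$ solves the homogeneous HJB equation in the viscosity sense. This yields the weak inequality $V\ge$ RHS of \eqref{eq:dual_obj_thm}.

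For strong duality I would invoke Fenchel--Rockafellar exactly as in \cite{guo2019calibration}: the primal cost is convex and lower semicontinuous in $(\rho,\cA,\cB)$ under the weak-$*$ topology on measures, the constraints are linear, and admissibility of Problem~\ref{pb:main} provides the required qualification (a feasible point with finite cost), so no duality gap occurs and the primal infimum is attained; this last point combines coercivity of $F$ in $\beta$ with standard compactness of the sublevel sets in $\cP([0,T]\times\R^2)$. The one subtlety is that the dual problem lives in a space where $\phi$ need not be classical, which is why the statement only asserts $\phi$ to be a viscosity solution, with the $BV([0,T],C_b^2)$ hypothesis imposed only to read off the optimal controls.

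Finally, to extract the optimal $(\alpha,\beta)$ under the additional regularity $\phi^*\in BV([0,T],C_b^2(\R^2))$, I would use the first-order optimality condition in the Lagrangian. Since the inner pointwise minimisation is
\eqn{
  \inf_{(\alpha,\beta)}\{F(\alpha,\beta)-\alpha\cdot\Dx\phi^*-\tfrac12\beta:\Dxx\phi^*\} = -F^*(\Dx\phi^*,\tfrac12\Dxx\phi^*),
}
the argminimiser is given, by convex duality, by $(\alpha_t,\beta_t)=\nabla F^*(\Dx\phi^*(t,\cdot),\tfrac12\Dxx\phi^*(t,\cdot))$, and this identification holds $d\rho_t\,dt$-almost everywhere on the support of any optimal $\rho$. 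The main obstacle in executing the plan is the rigorous handling of the Dirac sources in the HJB equation \eqref{eq:hjb}, i.e.\ justifying the jump decomposition of $\phi$ and the resulting pairing with $\rho$ at the singular times $\tau_i,t_0,T$; this is where the weak continuity of $\rho$ in time from \cite{loeper2006} and the regularity of the payoffs $G_i,J,H_i\circ J,\xi$ enter, exactly paralleling the analogous step in \cite{guo2019calibration}.
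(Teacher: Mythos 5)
Your proposal is correct and follows essentially the same route as the paper. The paper does not spell out the argument but defers it to Proposition 3.5 of \cite{guo2019calibration} and the Fenchel--Rockafellar theorem \cite[Theorem 1.9]{villani2003book}, which is precisely the convexification-plus-Lagrangian-duality scheme you describe; your identification of the delicate step (the viscosity/distributional treatment of the Dirac source terms and the pairing with $\rho$ at the jump times $\tau_i, t_0, T$) is also the point the paper defers to \cite[Section 3.3]{guo2019calibration}.
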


Theorem \ref{thm:duality} is an application of the Fenchel--Rockafellar duality theorem \cite[Theorem 1.9]{villani2003book}. Due to the presence of ${\cal D}$ in the source terms, the viscosity solution $\phi$ satisfies \eqref{eq:hjb} in the sense of distributions\footnote{For the precise definition of viscosity solutions to (\ref{eq:hjb}) and the corresponding comparison principle, we refer the reader to \cite[Section 3.3]{guo2019calibration}.}. Moreover, $\phi$ has possible discontinuities at $t_0$, $T$ and $\tau_i,\,i=1,\ldots,m$. The numerical solution to (\ref{eq:hjb}) is described in detail in Section \ref{sec:num_method}.  For the cost function $F$ defined in (\ref{eq:costfunction}), the convex conjugate $F^*$ is given in Lemma \ref{lemma:f_conjugate}.

\begin{remark}\label{rmk:3.5}
As mentioned in the previous work \cite{guo2019calibration}, the admissibility condition in Theorem \ref{thm:duality} was imposed for fulfilling the conditions of Fenchel--Rockafellar theorem and hence simplifying the presentation and arguments. However, it is possible to remove this assumption from Proposition \ref{prop:pde} with some modifications in the proof and still obtain the duality result in Theorem \ref{thm:duality}. Furthermore, characterising the admissibility of Problem \ref{pb:main} can be seen as a more elaborate analogue of Strassen's theorem for the classical optimal transport problem, which is however out of the scope of this paper. 
\end{remark}

In the dual formulation, the supremum can be solved by a standard optimisation algorithm. As pointed out in \cite[Lemma 4.5]{guo2019calibration}, the convergence can be improved by providing the gradients of the objective.
\begin{lemma} \label{lem:grad}
Suppose Problem \ref{pb:main} is admissible and let 
\eqn{
L(\lambda^{SPX},\lambda^{VIX,f},\lambda^{VIX},\lambda^{\xi}):= \lambda^{SPX}\cdot u^{SPX} + \lambda^{VIX,f} u^{VIX,f} + \lambda^{VIX}\cdot u^{VIX} - \phi(0, X_0).
}
Then, the gradients of the objective can be formulated as the difference between the market prices and the model prices:
\eq{
  \partial_{\lambda^{SPX}_i}L &= u^{SPX}_i - \bE^\bP G_i(X_{\tau_i}), \quad i=1,\ldots,m, \label{eq:grad_spx}\\
  \partial_{\lambda^{VIX,f}}L &= u^{VIX,f} - \bE^\bP J(X_{t_0}), \label{eq:grad_vix_f}\\
  \partial_{\lambda^{VIX}_i}L &= u^{VIX}_i - \bE^\bP (H_i\circ J)(X_{t_0}), \quad i=1,\ldots,n, \label{eq:grad_vix}\\
  \partial_{\lambda^\xi}L &= - \bE^\bP \xi(X_T). \label{eq:grad_singular}
}
\end{lemma}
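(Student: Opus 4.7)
The plan is to split $L$ into the affine piece $\lambda^{SPX}\cdot u^{SPX} + \lambda^{VIX,f} u^{VIX,f} + \lambda^{VIX}\cdot u^{VIX}$, whose partial derivatives in $\lambda^{SPX}$, $\lambda^{VIX,f}$, $\lambda^{VIX}$ and $\lambda^\xi$ are trivially $u^{SPX}$, $u^{VIX,f}$, $u^{VIX}$ and $0$, and the nonlinear piece $-\phi(0,X_0)$. The substantive task is to compute $\partial_{\lambda^\bullet}\phi(0,X_0)$; the key observation is that these derivatives are governed by the linearisation of the HJB equation \eqref{eq:hjb} about the current $\phi$, whose principal part is precisely the generator of the Markov diffusion $\bP\in\cP_{joint}^{loc}$ associated with $\phi$ through the Legendre identity of Theorem \ref{thm:duality}.

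Taking $\lambda^{SPX}_i$ as a representative case, I set $\psi:=\partial_{\lambda^{SPX}_i}\phi$ and differentiate \eqref{eq:hjb} formally by the chain rule, using $\nabla F^*(\Dx\phi,\demi\Dxx\phi)=(\alpha^\bP(t,\cdot),\beta^\bP(t,\cdot))$ to eliminate the derivatives of $F^*$. This yields the linear parabolic equation
$$
\dt\psi + \alpha^\bP(t,x)\cdot\Dx\psi + \demi\,\beta^\bP(t,x):\Dxx\psi \;=\; -G_i(x)\,{\cal D}(t-\tau_i),
$$
with terminal condition $\psi(T,\cdot)=0$. This is the Kolmogorov backward equation for $\bP$ with a Dirac time-source localised at $\tau_i$, so applying It\^o's formula to $\psi(t,X_t)$ between $0$ and $T$ and taking $\bE^\bP$ gives $\psi(0,X_0)=\bE^\bP G_i(X_{\tau_i})$, which yields \eqref{eq:grad_spx}. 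The identities \eqref{eq:grad_vix_f}, \eqref{eq:grad_vix} and \eqref{eq:grad_singular} follow by the same argument with the Dirac time placed at $t_0$ (resp.\ $T$) and the spatial payoff replaced by $J$, $H_i\circ J$ and $\xi$; the absence of a market-price contribution in \eqref{eq:grad_singular} simply reflects that $\lambda^\xi$ multiplies the null price of the singular contract.

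The main obstacle is the rigorous justification of the formal differentiation of \eqref{eq:hjb} in $\lambda$: $\phi$ is only a distributional viscosity solution with possible jumps at $t_0$, $T$ and each $\tau_i$, and $F^*$ is merely convex and lower semicontinuous on $\R^2\times\bS^2$, so neither pointwise chain-rule manipulations nor classical parabolic regularity are immediately at hand. A cleaner route is to work at the level of the Lagrangian: attaching multipliers $\lambda^{SPX}$, $\lambda^{VIX,f}$, $\lambda^{VIX}$, $\lambda^\xi$ to the discrete constraints \eqref{cons:spx}--\eqref{cons:singular} and $-\phi$ to \eqref{cons:fokker-planck} in Problem \ref{pb:main}, one recognises $L(\lambda)$ as the value of the inner infimum, and Danskin's theorem, combined with the primal attainment asserted in Theorem \ref{thm:duality} and the linearity of the constraint functionals in $\rho$, delivers \eqref{eq:grad_spx}--\eqref{eq:grad_singular} in one stroke. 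Since the resulting identities are structurally identical to those established as Lemma 4.5 of \cite{guo2019calibration}, the most efficient write-up simply reduces the argument to that reference.
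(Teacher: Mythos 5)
The paper gives no in-text proof of this lemma; it simply cites Lemma 4.5 of \cite{guo2019calibration} and then, in the following remark, introduces the linear pricing PDEs \eqref{pde:pricing} used to evaluate the right-hand sides numerically. So there is no ``paper's own proof'' to compare against, and your task was really to reconstruct the underlying argument.

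Your reconstruction is correct, and it is essentially the argument one finds in the cited reference. The formal route is exactly right: differentiating \eqref{eq:hjb} in $\lambda^{SPX}_i$ and using the first-order relation $\nabla F^*(\Dx\phi,\tfrac12\Dxx\phi)=(\alpha,\beta)$ from Theorem~\ref{thm:duality} gives a linear backward equation for $\psi=\partial_{\lambda^{SPX}_i}\phi$ with source $-G_i\,{\cal D}(t-\tau_i)$; solving backwards from $\psi(T,\cdot)=0$, this forces $\psi\equiv 0$ on $(\tau_i,T]$, a jump $\psi(\tau_i^-,\cdot)=G_i$, and then the backward Kolmogorov equation on $[0,\tau_i)$, so Feynman--Kac yields $\psi(0,X_0)=\bE^\bP G_i(X_{\tau_i})$. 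This is precisely equation \eqref{pde:pricing} with $({\cal G},{\cal T})=(G_i,\tau_i)$, which corroborates your derivation against what the paper actually implements. You correctly identify the regularity obstruction (viscosity/distributional $\phi$, non-smooth $F^*$), and the Danskin/envelope route you offer as a cleaner alternative is indeed the standard way to sidestep it: with $L(\lambda)$ the value of the inner optimisation in the saddle-point formulation and primal attainment granted by Theorem~\ref{thm:duality}, Danskin's theorem gives the gradient as the constraint residual at the primal optimiser, exactly \eqref{eq:grad_spx}--\eqref{eq:grad_singular}. Your observation that \eqref{eq:grad_singular} has no market-price term because $\bE^\bP\xi(X_T)$ is constrained to zero is also correct. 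The only small thing worth flagging is that for VIX options and futures the two Dirac sources sit at the same time $t_0$, so the differentiated equation at $t_0$ produces a single terminal jump of the combined payoff $\lambda^{VIX,f}J + \sum_i\lambda^{VIX}_i(H_i\circ J)$ whose $\lambda$-derivatives separate componentwise; this is implicit in your ``same argument'' clause but is worth saying explicitly, since it underlies equation \eqref{hjb:0_t0}.
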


In the optimisation process, the gradients are decreasing to zero while the solution is approaching the optimal solution, which illustrates the improving matching of model prices with the market prices. We note that the model prices, corresponding to a particular model $(\alpha,\beta)$, are obtained, via the Feynman-Kac formula, by solving linear pricing PDEs. More precisely, the model price of an instrument with payoff $\cal G$ and maturity $\cal T$ is equal to $\bE^\bP {\cal G}(X_{\cal T}) = \phi'(0,X_0)$, where $\phi'$ satisfies 
\eq{
  \left\{\begin{array}{l}
    \displaystyle \dt\phi' + \alpha\cdot\Dx\phi' + \demi\beta:\Dxx\phi' = 0, \qquad \mbox{in } [0,{\cal T})\times\R^2, \\
    \displaystyle \phi'({\cal T},\cdot) = {\cal G}.
  \end{array}\label{pde:pricing} \right.
}
When applying Lemma \ref{lem:grad}, we shall be using \eqref{pde:pricing} $m$ times for $({\cal G},{\cal T})=(G_i,\tau_i)$, $i=1,\ldots,m$, once for $({\cal G},{\cal T})=(J,t_0)$, $n$ times for $({\cal G}, {\cal T})=(H_i\circ J, t_0)$, $i=1,\ldots, n$, and once for $({\cal G},{\cal T})=(\xi,T)$. We shall simply refer to this as solving 
the linear pricing PDEs \eqref{pde:pricing}. Naturally, once the optimal model $(\alpha^*,\beta^*)$ is found, the above can be used not only to verify that it is indeed calibrated but also to compute other option prices under the model.

\begin{remark}
  The most computationally expensive operation of numerically solving \eqref{pde:pricing} is inverting a large sparse matrix. However, since the computations of all components of the gradient involve solving the same linear PDE but with different terminal conditions, the matrix inversion only need to be carried out once per time step. Alternatively, all gradients can be efficiently computed in one Monte Carlo simulation. In the numerical examples below (see Section \ref{sec:num_exp}), we choose to numerically solve \eqref{pde:pricing} for the sake of accuracy.
\end{remark}

\section{Numerical methods}\label{sec:num_method}
\subsection{Solving the dual formulation}

The numerical method proposed in \cite{guo2019calibration} can be directly applied to solve the dual formulation, albeit with a number of caveats. Let us first recall the numerical method. Given an initial guess $(\lambda^{SPX},\lambda^{VIX,f},\lambda^{VIX},\lambda^{\xi})$, we solve the HJB equation (\ref{eq:hjb}) to get $\phi(0,X_0)$ and hence to calculate the objective value. Due to the presence of the Dirac delta functions $\cal D$, $\phi$ might be discontinuous in time. The HJB equation can be solved in several time intervals in which, in each interval, the solution $\phi$ is continuous in both time and space, and the source terms with $\cal D$ can be incorporated into the terminal conditions. For example, if we consider SPX options with maturities $t_0$ and $T$, the HJB equation (\ref{eq:hjb}) can be reformulated as follows:
\eq{
  &\left\{\begin{array}{l}
    \begin{array}{l}
      \displaystyle \dt\phi + \sup_{\beta\in\bS^2_+}\bigg( -\demi\beta_{11}\partial_{x_1}\phi - \demi\beta_{11}\partial_{x_2}\phi + \demi\beta_{11}\partial_{x_1 x_1}\phi \\
      \displaystyle  \qquad\qquad\qquad + \beta_{12}\partial_{x_1 x_2}\phi + \demi\beta_{22}\partial_{x_2 x_2}\phi - \sum_{i,j=1}^2 (\beta_{ij} - \bar\beta_{ij})^2 \bigg) = 0 
    \end{array} \qquad\mbox{in } [t_0, T), \\
    \displaystyle \phi(T^-,\cdot) = \sum_{i=1}^m \lambda^{SPX}_i G_i \mathds{1}(\tau_i=T) + \lambda^\xi\xi, 
  \end{array}\right. \label{hjb:t0_T}\\
  &\left\{\begin{array}{l}
    \begin{array}{l}
      \displaystyle \dt\phi + \sup_{\beta\in\bS^2_+}\bigg( -\demi\beta_{11}\partial_{x_1}\phi - \demi\beta_{11}\partial_{x_2}\phi + \demi\beta_{11}\partial_{x_1 x_1}\phi \\
      \displaystyle  \qquad\qquad\qquad + \beta_{12}\partial_{x_1 x_2}\phi + \demi\beta_{22}\partial_{x_2 x_2}\phi - \sum_{i,j=1}^2 (\beta_{ij} - \bar\beta_{ij})^2 \bigg) = 0 
    \end{array} \qquad\mbox{in } [0, t_0), \\
    \displaystyle \phi(t_0^-,\cdot) = \phi(t_0,\cdot) + \sum_{i=1}^m\lambda^{SPX}_i G_i \mathds{1}(\tau_i=t_0) + \lambda^{VIX,f}J + \sum_{i=1}^n \lambda^{VIX}_i (H_i\circ J).
  \end{array}\right. \label{hjb:0_t0}
}
We then calculate the gradients of the objective by Lemma \ref{lem:grad}, in which the linear pricing PDEs \eqref{pde:pricing} are solved by an alternating direction implicit (ADI) method (see e.g., \cite{foulon2010adi}). Once we have the gradient values, we update $(\lambda^{SPX},\lambda^{VIX,f},\lambda^{VIX},\lambda^{\xi})$ by moving them against their gradients or by supplying gradients to an optimisation algorithm. Notably, the L-BFGS algorithm \cite{LBFGS1989} was employed and showed good convergence. The above steps are repeated until some optimality condition is met. When Problem \ref{pb:main} is not admissible, i.e., there does not exist a probability measure that calibrates the model to the given prices, we observe that the numerical solution will not converge, which is consistent with the arguments in Remark \ref{rmk:3.5}. The numerical method is summarised in Appendix \ref{appendix:algo}.

\subsection{Solving HJB equations}

In terms of numerical schemes for HJB equations, in their seminal work, Barles and Souganidis \cite{Barles1991monotone} have established a convergence that requires schemes to be monotone. Since then, a wide literature on monotone schemes has developed. For multidimensional HJB equations, it is usually difficult to construct a monotone scheme because of the cross partial derivative terms. To ensure monotonicity, the explicit wide stencil schemes were studied by Bonnans and Zidani \cite{Bonnans2003hjb} and by Debrabant and Jakobsen \cite{Debrabant2013semilagrangian}; however, the stability of explicit schemes are restricted by some CFL condition. In \cite{ma2017monotone}, Ma and Forsyth proposed an implicit wide stencil finite difference scheme with a local coordinate rotation which is unconditionally stable. They also maximised the use of the fixed point stencil and the central finite difference scheme to improve the order of accuracy while preserving the monotonicity of the scheme.

In this paper, we solve the HJB equations by a fully implicit finite difference method with central-difference schemes for approximating both first- and second-order derivatives. We discretise the time interval, and then, at each time step, we approximate $\beta$ by Lemma \ref{lemma:f_conjugate}. Once the optimal $\beta$ has been found, the fully nonlinear HJB equation reduces to a linear PDE which can be solved by the standard implicit finite difference method. When approximating $\beta$, we start with an arbitrary $\phi$ to approximate the derivatives of $\phi$. Next we solve the linearised PDE and plug the solution back into the supremum to approximate $\beta$ at the same time. The above procedure is repeated until $\phi$ converges, then we proceed to the next time step. This successive approximation is known as policy iteration in the literature. A good approximation to the initial $\phi$ is the one from the previous time step, which makes $\phi$ converge within a few iterations.

It is difficult to choose the boundary conditions of the HJB equations for this problem. Consider a computational domain $(x_1,x_2)\in[X^1_{min},X^1_{max}]\times[0,X^2_{max}]$. We impose the following boundary conditions to equations (\ref{hjb:t0_T}) and (\ref{hjb:0_t0}):
\eqn{
  &\left\{\begin{array}{ll}
    \Dxx\phi(t,x) = \Dxx\phi(T^-,x), \qquad &\mbox{for } (t,x)\in[t_0,T)\times(\{X^1_{min},X^1_{max}\}\times[0,X^2_{max}]\cup[X^1_{min},X^1_{max}]\times\{X^2_{max}\})  \\
    \phi(t,x) = \phi(T^-,x), \qquad &\mbox{for } (t,x)\in[t_0,T)\times[X^1_{min},X^1_{max}]\times\{0\}\\
  \end{array}\right.\\
  &\left\{\begin{array}{ll}
    \Dxx\phi(t,x) = \Dxx\phi(t_0^-,x), \qquad &\mbox{for } (t,x)\in[0,t_0)\times(\{X^1_{min},X^1_{max}\}\times[0,X^2_{max}]\cup[X^1_{min},X^1_{max}]\times\{X^2_{max}\})  \\
    \phi(t,x) = \phi(t_0^-,x), \qquad &\mbox{for } (t,x)\in[0,t_0)\times[X^1_{min},X^1_{max}]\times\{0\}\\
  \end{array}\right.
}
In addition, we set a sufficiently large computational domain to further reduce the impact of the boundary conditions. Since the linear pricing PDEs are related to the HJB equation, we use the following boundary conditions for equations \eqref{pde:pricing}:
\eqn{
  &\left\{\begin{array}{ll}
    \Dxx\phi'(t,x) = \Dxx\cG(x), \qquad &\mbox{for } (t,x)\in[0,{\cal T})\times(\{X^1_{min},X^1_{max}\}\times[0,X^2_{max}]\cup[X^1_{min},X^1_{max}]\times\{X^2_{max}\})  \\
    \phi'(t,x) = \cG(x), \qquad &\mbox{for } (t,x)\in[0,{\cal T})\times[X^1_{min},X^1_{max}]\times\{0\}.\\
  \end{array}\right.
}

As noted in \cite{Kushner2001numerical}, the standard finite difference schemes are non-monotone unless the diffusion matrix is diagonally dominated. In spite of being non-monotone in general, this scheme has the advantage of second-order accuracy for smooth solutions and ease of implementation compared to sophisticated monotone schemes. In fact, the variance of $X^2_{t,T}$ is much smaller than the variance of $X^1_t$, especially when $t$ is close to $T$. Thus, we scale up $X^2_{t,T}$ by performing a simple change of variables: $(X^1,X^2)\mapsto (X^1,KX^2)$ with $K>1$. In the numerical example of the next section we take $K=40$. Although the diffusion matrix is not diagonally dominated and the scheme is still non-monotone in general, it shows good stability and convergence for this problem after the scaling. 

\subsection{Smoothing the volatility skews}\label{sec:smoothing}

It is clear from the formulation of Problem \ref{pb:main} that the reference $\bar\beta$ influences, potentially in a very significant way, the solution. This is also confirmed by our numerics, see Section \ref{sec:simulateddata} below. However, in practice, a good selection of the reference $\bar\beta$ might not be available. Assume that there exists a $\bP_{mkt}\in\cP^{loc}_{joint}$, characterised by $(\alpha_{mkt},\beta_{mkt})$, which describes the real market dynamics. When $\bar\beta$ is far away from $\beta_{mkt}$, even though the optimised model matches all the calibrating option prices, the optimal $\beta$ may still be very different from $\beta_{mkt}$. In the numerical experiment, we observed spiky volatility surfaces and hump-shaped model volatility skews. This is not surprising because the optimiser is trying to match the model prices to the calibrating option prices while keeping $\beta$ close to $\bar\beta$.

Denote by $F^{\bar\beta}$ the cost function defined in \eqref{eq:costfunction} with reference $\bar\beta$. Let $V(\bar\beta)$ be the optimal objective value of Problem \ref{pb:main} with cost function $F^{\bar\beta}$. If $V(\bar\beta)<\infty$, by Theorem \ref{thm:duality}, $V(\bar\beta)$ is equal to the optimal objective value of the dual formulation with $(F^{\bar\beta})^*$ in the HJB equation \eqref{eq:hjb}. Let $R(\bar\beta)$ be some regularisation term that measures the smoothness of $\bar\beta$. In order to smooth out the volatility surfaces and the model volatility skews, it is natural to consider the following problem:
\eq{\label{eq:ref_iter_obj}
  \arginf_{\bar\beta\in L^1(d\rho_t dt, \bS_+^2)} V(\bar\beta) + R(\bar\beta).
} 
While we might not actually solve this problem, it motivates our \emph{reference measure iteration} method. We start with an initial reference $\bar\beta^0$ and numerically solve the dual formulation with cost function $F^{\bar\beta^0}$. Then an optimal $(\beta^*)^0$ is obtained as a by-product of solving \eqref{eq:hjb}. Next, we smooth $(\beta^*)^0$ by a simple moving average over $(t, X^1,X^2)$ with bandwidths of $(l_t, l_{x_1}, l_{x_2})$. In the numerical examples, we set $(l_t, l_{x_1}, l_{x_2}) = (3, 5, 5)$. Next, we set the smoothed $(\beta^*)^0$ to $\bar\beta^1$ and solve the dual formulation with $\bar\beta^1$. The above steps are repeated until the model volatility skews are smooth enough.

\begin{remark}
  When the calibrating instruments include VIX futures, the elements of $\bar\beta(t,x_1,x_2)$ might contain spikes around $x_2=0$, which might lead to numerical instability if we take a spiky $\bar\beta$ as the reference. In the numerical experiments below, we remove these spikes by replacing the values of $\bar\beta(\cdot,\cdot, x_2), x_2<\epsilon$ with an approximation calculated by linearly extrapolating the values of $\bar\beta(\cdot,\cdot, x_2), x_2\geq\epsilon$ along $x_2$, where $\epsilon$ is a small positive number. We find that this simple workaround effectively eliminates the numerical instability.
\end{remark}

Let us call the optimisation of solving \eqref{eq:dual_obj_thm} as the \emph{inner iteration} and call the optimisation of solving \eqref{eq:ref_iter_obj} as the \emph{outer iteration}. For the outer iteration, if the optimal $\bar\beta$ that achieves the infimum in \eqref{eq:ref_iter_obj} is not very smooth, bandwidths $(l_t, l_{x_1}, l_{x_2})$ with large values might cause the optimiser to search around the optimal $\bar\beta$ forever. Thus, $(l_t, l_{x_1}, l_{x_2})$ can be intuitively interpreted as the ``step size'' for the outer iteration. Moreover, in practice, we can apply an early stop technique by only running for a few iterations for the inner iteration. By doing so, the optimiser is alternating between the inner iteration and the outer iteration. We include this procedure in our numerical routines presented in the next section.

\section{Numerical experiments}\label{sec:num_exp}
\subsection{Simulated data}\label{sec:simulateddata}

In this section, we present a numerical example to demonstrate our method. We generate some calibrating options and futures prices from a Heston model with given parameters $(\kappa, \theta, \omega, \eta)$, and we call this model the \emph{generating model}. Next, we calibrate the semimartingale $X$ to these simulated prices by solving the dual formulation. In this case, we know that there exists such a probability measure $\bP\in\cP^1$ that $X$ can be fully calibrated to the simulated prices under $\bP$, i.e., $\cP^{loc}_{joint}\neq\emptyset$ . Recall that the interest rates and dividends are set to null. The characteristics of $\bP$ are given by \eqref{eq:heston_dynamics_characteristics} and the calibrating options and futures prices are computed by solving the linear pricing PDEs \eqref{pde:pricing}.

Recall that Problem \ref{pb:main}, combined with Proposition \ref{prop:localisation}, looks for a Markovian diffusion model which minimises a certain distance to a \emph{reference model} $\bar\beta$ subject to being calibrated. In this section we not only show that our approach is feasible but also investigate the potential influence of the choice of the reference $\bar\beta$. Specifically, we consider two reference models:
\vspace{-1em}
\eq{
  \intertext{(a) a Heston model with a different set of parameters $(\bar\kappa, \bar\theta, \bar\omega, \bar\eta)$: }
    \bar\beta(t,X^1_t,X^2_{t,T}) &= \left[ \begin{array}{cc} \nu(t,X^2_{t,T},\bar\kappa,\bar\theta) & \demi\bar\eta\bar\omega A(t,\bar\kappa)\nu(t,X^2_{t,T},\bar\kappa,\bar\theta) \\ \demi\bar\eta\bar\omega A(t,\bar\kappa)\nu(t,X^2_{t,T},\bar\kappa,\bar\theta) & \frac{1}{4}\bar\omega^2 A(t,\bar\kappa)^2 \nu(t,X^2_{t,T},\bar\kappa,\bar\theta) \end{array} \right]; \label{eq:ref_heston}
  \intertext{(b) a model with constant reference values: }
    \bar\beta(t,X^1_t,X^2_{t,T}) &= \left[ \begin{array}{cc} \bar\beta_{11} & \bar\beta_{12} \\ \bar\beta_{12} & \bar\beta_{22} \end{array} \right]. 
}
The optimal models $(\alpha^*,\beta^*)$ obtained using these two reference values will be referred to, respectively, as the \emph{OT-calibrated model with a Heston reference} and the \emph{OT-calibrated model with a constant reference}. These should not be confused with the generating (Heston) model. The idea behind the selection of candidates is to analyse the significance of $\bar\beta$ by comparing the results between two cases: (a) the dynamics of the reference model are close to the true dynamics, (b) the dynamics of the reference model are very different from the true dynamics. Note that in (a), if $(\bar\kappa, \bar\theta, \bar\omega, \bar\eta)=(\kappa, \theta, \omega, \eta)$, the supremum in \eqref{eq:dual_obj_thm} is achieved by a null vector $\mathbf{0}\in\R^{m+n+2}$ and hence $V=0$. In this case, the OT-calibrated model quickly recovers the generating model.

Let $t_0 = 49$ days and $T = 79$ days. The calibration instruments we consider are:
\begin{enumerate}
    \item SPX call options maturing at $44$ days ($= t_0 - 5$ days) and $T = 79$ days,
    \item VIX futures maturing at $t_0 = 49$ days,
    \item VIX call options maturing at $t_0 = 49$ days.
\end{enumerate}
Note that we also need to consider the singular contract (i.e., $\bE^\bP\xi(X_T)=0$) to ensure that the dynamics of $X$ are correct. All the parameter values and their interpretations are given in Table \ref{table:params}.

\begin{table}[t]
\noindent
\begin{tabularx}{\textwidth}{>{\hsize=.2\hsize\linewidth=\hsize}X
>{\hsize=.2\hsize\linewidth=\hsize}X X}
  \toprule
  Parameter & Value & Interpretation  \\
  \midrule
  $S_0$ & 100 & SPX spot price \\
  $X^1_0$ & 4.6052 & Initial position of $X^1$ \\ 
  $X^2_{0,T}$ & 0.0098 & Initial position of $X^2$ \\ 
  $\kappa$ & 0.6 & Mean reversion speed of the generating model \\
  $\theta$ & 0.09 & Long-term variance of the generating model \\
  $\omega$ & 0.4 & Volatility-of-volatility of the generating model \\
  $\eta$ & -0.5 & Correlation between SPX and variance of the generating model \\
  $\bar\kappa$ & 0.9 & Mean reversion speed of the Heston reference model \\
  $\bar\theta$ & 0.04 & Long-term variance of the Heston reference model \\
  $\bar\omega$ & 0.6 & Volatility-of-volatility of the Heston reference model \\
  $\bar\eta$ & -0.3 & Correlation between SPX and variance of the Heston reference model \\
  $\bar\beta_{11}$ & 0.09 & Reference value of $\beta_{11}$ of the constant reference model \\
  $\bar\beta_{12}$ & -0.01 & Reference value of $\beta_{12}$ of the constant reference model \\
  $\bar\beta_{22}$ & 0.04 & Reference value of $\beta_{22}$ of the constant reference model \\
  \bottomrule
\end{tabularx}
\caption{Parameter values and interpretations for the simulated data example.}
\label{table:params}
\end{table}

In this example, we consider a uniformly discretised time interval with step size $\Delta t = 0.5$ day. The numerical solutions were mainly computed on a $100\times 100$ uniform grid points, except for that we use $100\times 400$ (i.e., $400$ grid points in $X^2$) grid points for the last $10$ time steps for capturing the small variation of $X_2$ around zero when $t$ is close to $T$. 

Ideally, we want the calibrated model to have at most 1 basis point error in implied volatility for both SPX options and VIX options. However, in our method, we can only calibrate the model to option prices instead of implied volatility. Therefore, we scale the payoff functions and option prices by dividing them by their Black--Scholes vegas, which roughly converts errors in option prices to errors in implied volatility. The optimisation algorithm will iterate until the maximum error between calibrating prices and model prices are below $0.0001$, or until it cannot be further optimised. In addition, the volatility skews are smoothed by the reference measure iteration method introduced in Section \ref{sec:smoothing}.

\begin{figure}[t]
   \begin{minipage}{0.49\textwidth}
     \centering
     \includegraphics[width=1.0\linewidth]{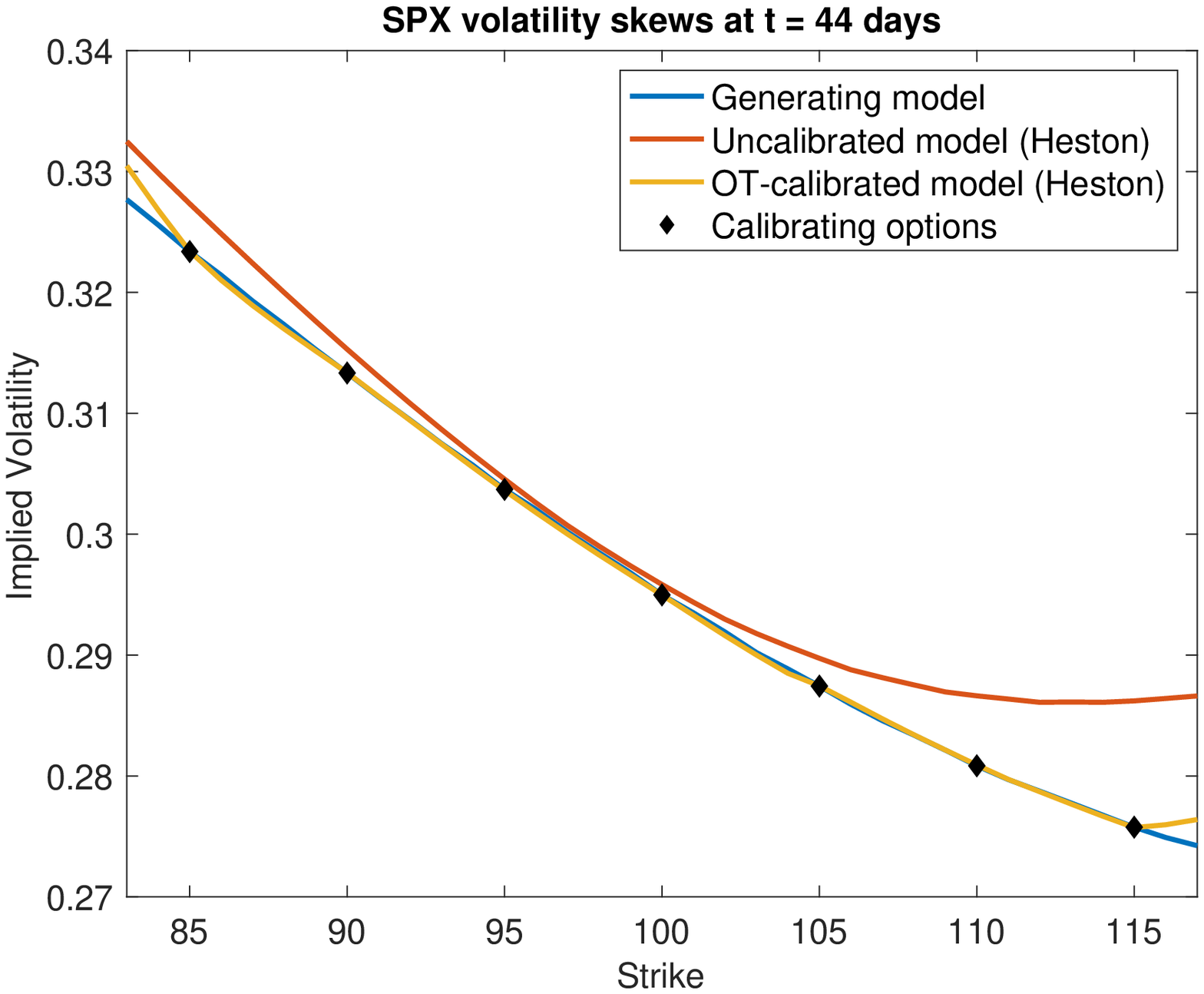}
   \end{minipage}\hfill
   \begin{minipage}{0.49\textwidth}
     \centering
     \includegraphics[width=1.0\linewidth]{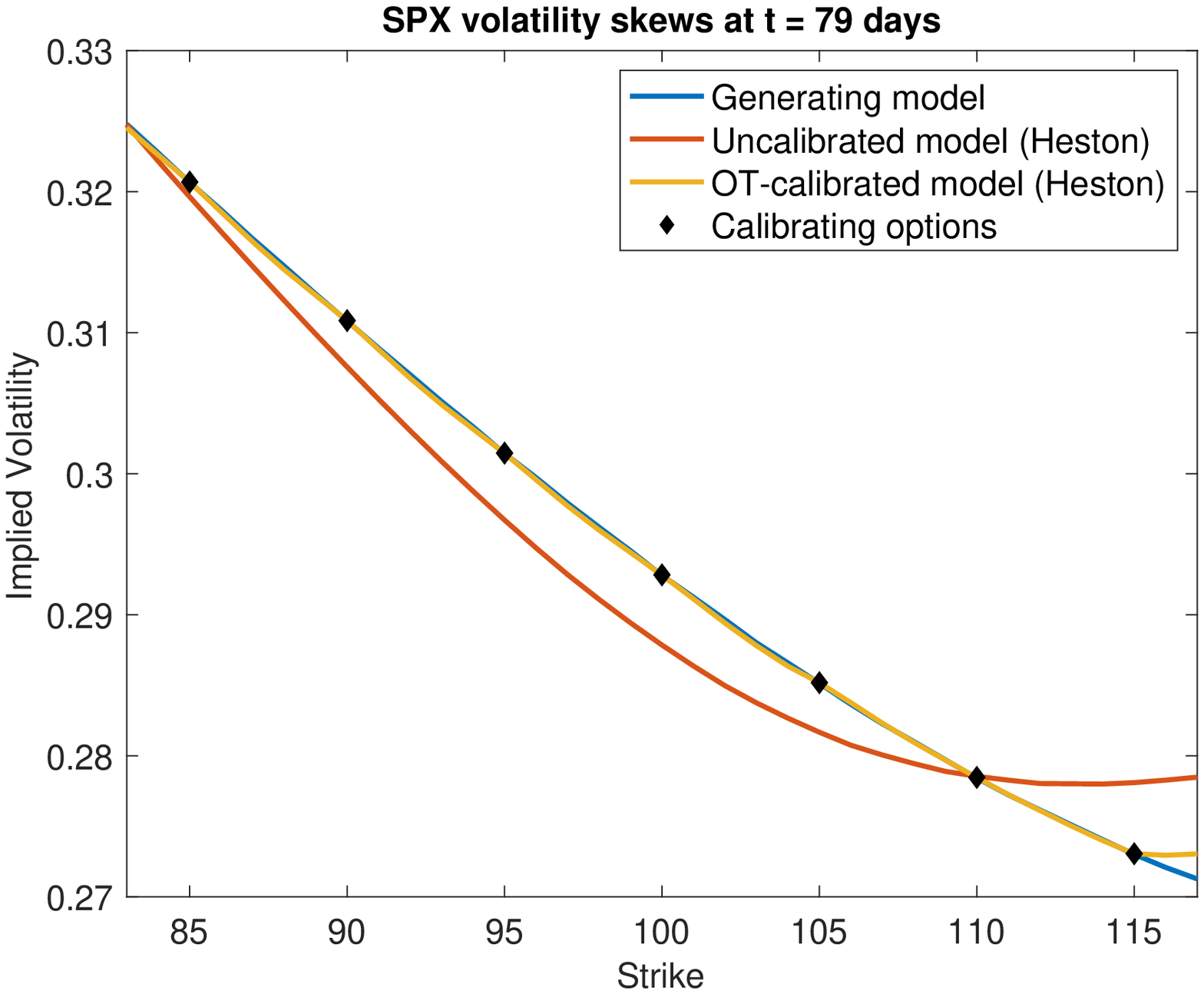}
   \end{minipage}
   \centering
   \begin{minipage}{0.49\textwidth}
     \centering
     \includegraphics[width=1.0\linewidth]{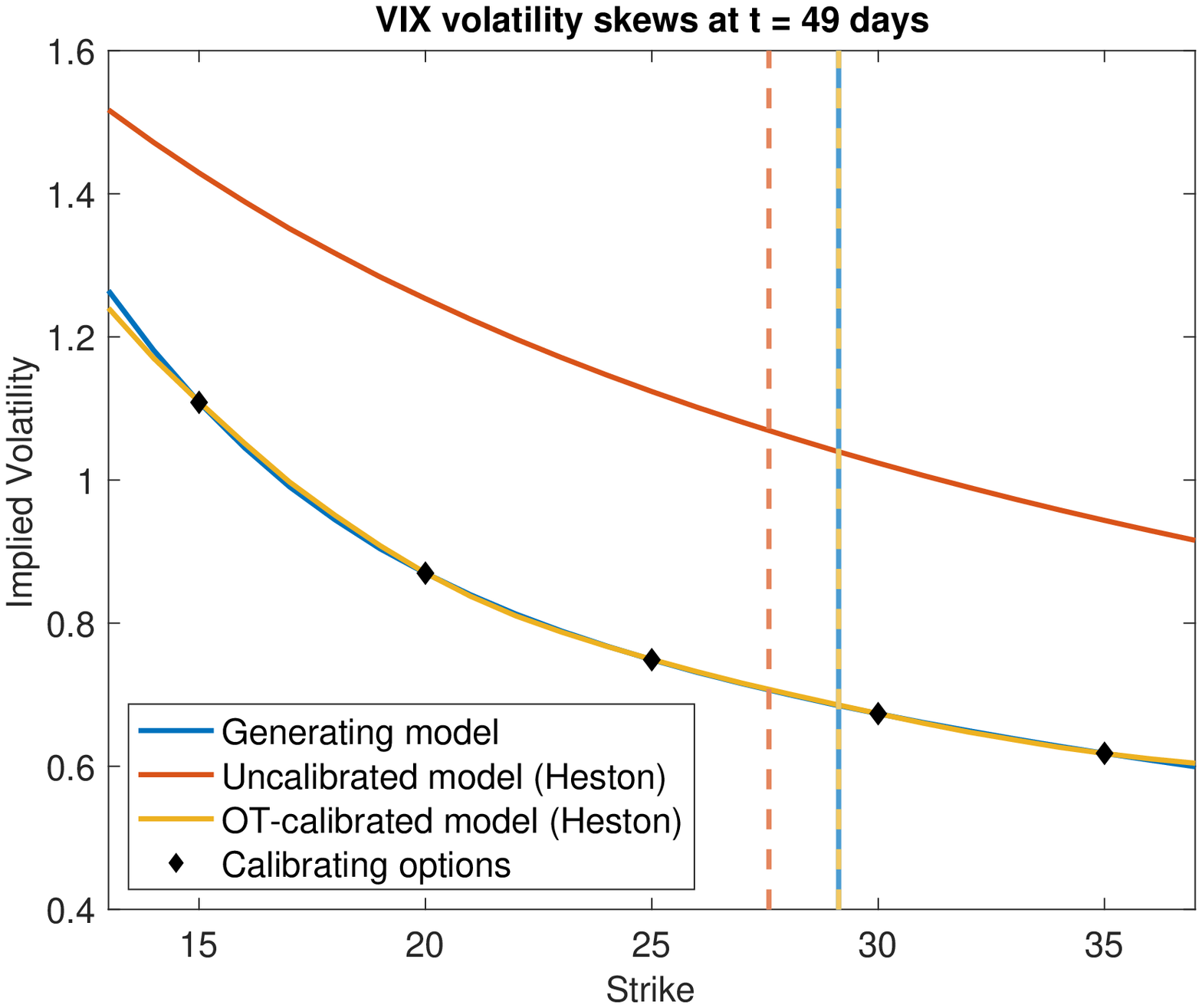}
   \end{minipage}
   \caption{The volatility skews of SPX options at $t_0-5$ days $=44$ days, SPX options at $T = 79$ days and VIX options at $t_0=49$ days for the simulated data example, including the implied volatility of the generating model, the uncalibrated Heston reference model and the OT-calibrated model with a Heston reference. The diamonds are the implied volatility of the calibrating options. The vertical lines are VIX futures prices.}
   \label{fig:vol_skews_heston}
\end{figure}

\begin{figure}[t]
   \begin{minipage}{0.49\textwidth}
     \centering
     \includegraphics[width=1.0\linewidth]{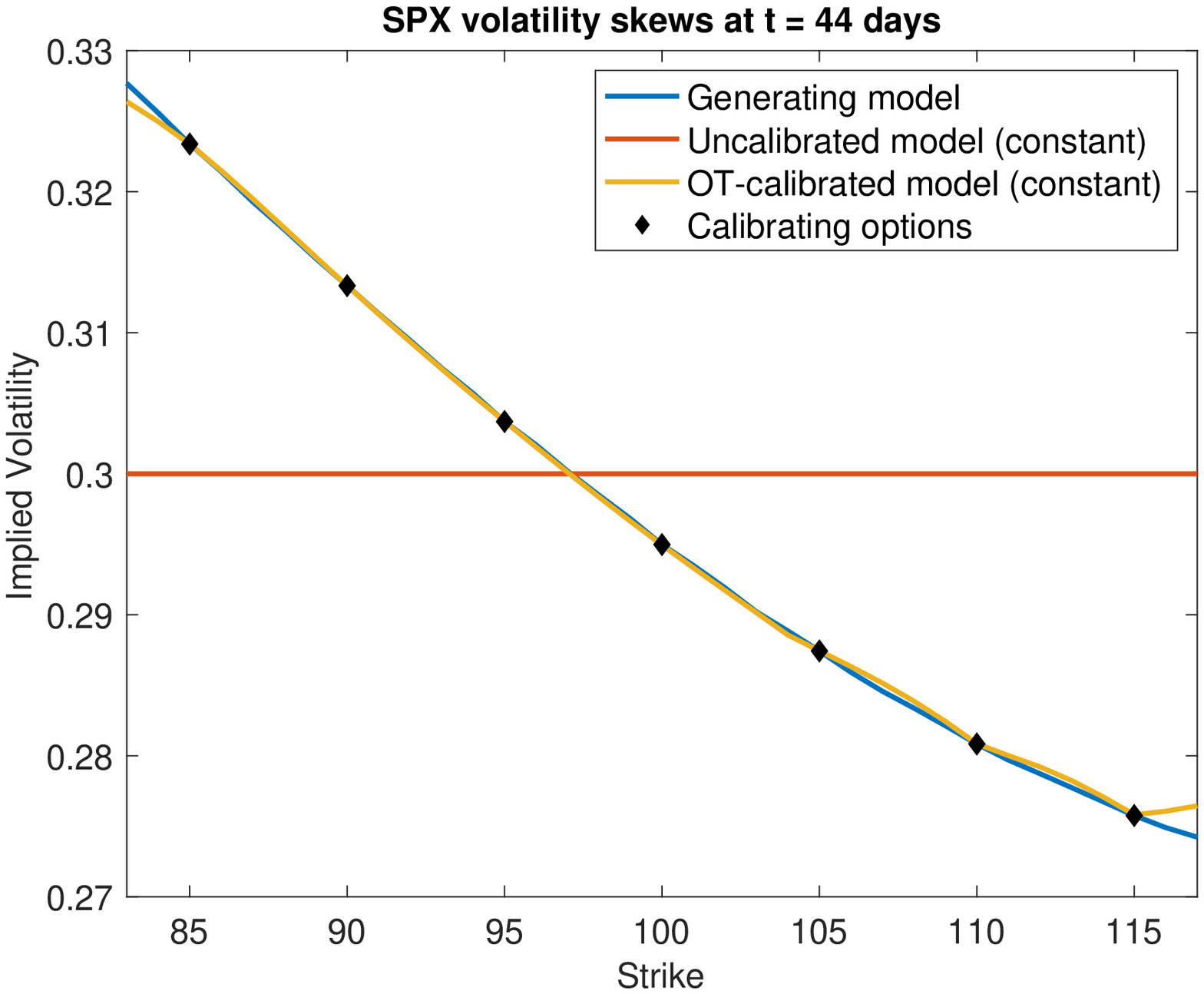}
   \end{minipage}\hfill
   \begin{minipage}{0.49\textwidth}
     \centering
     \includegraphics[width=1.0\linewidth]{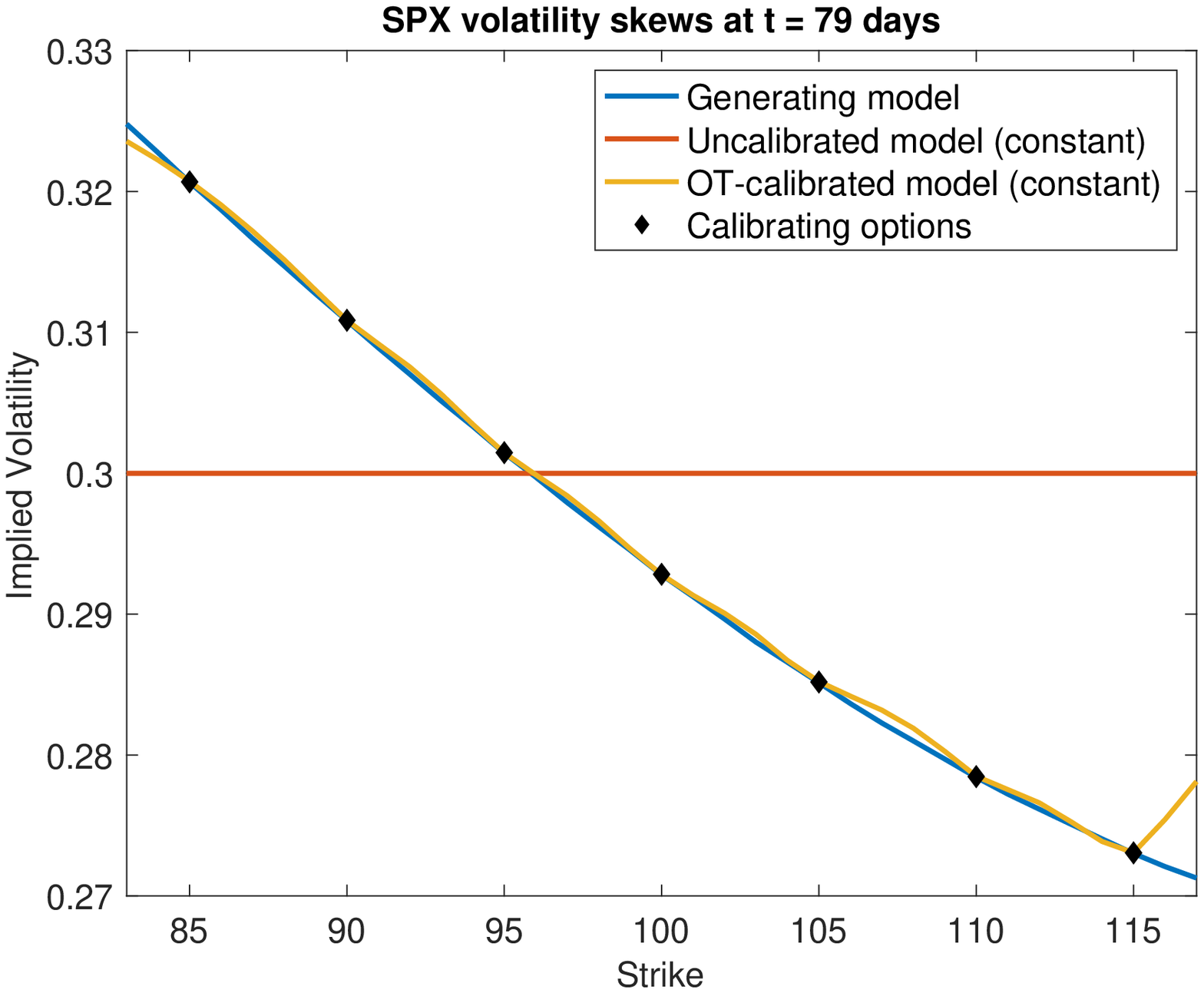}
   \end{minipage}
   \centering
   \begin{minipage}{0.49\textwidth}
     \centering
     \includegraphics[width=1.0\linewidth]{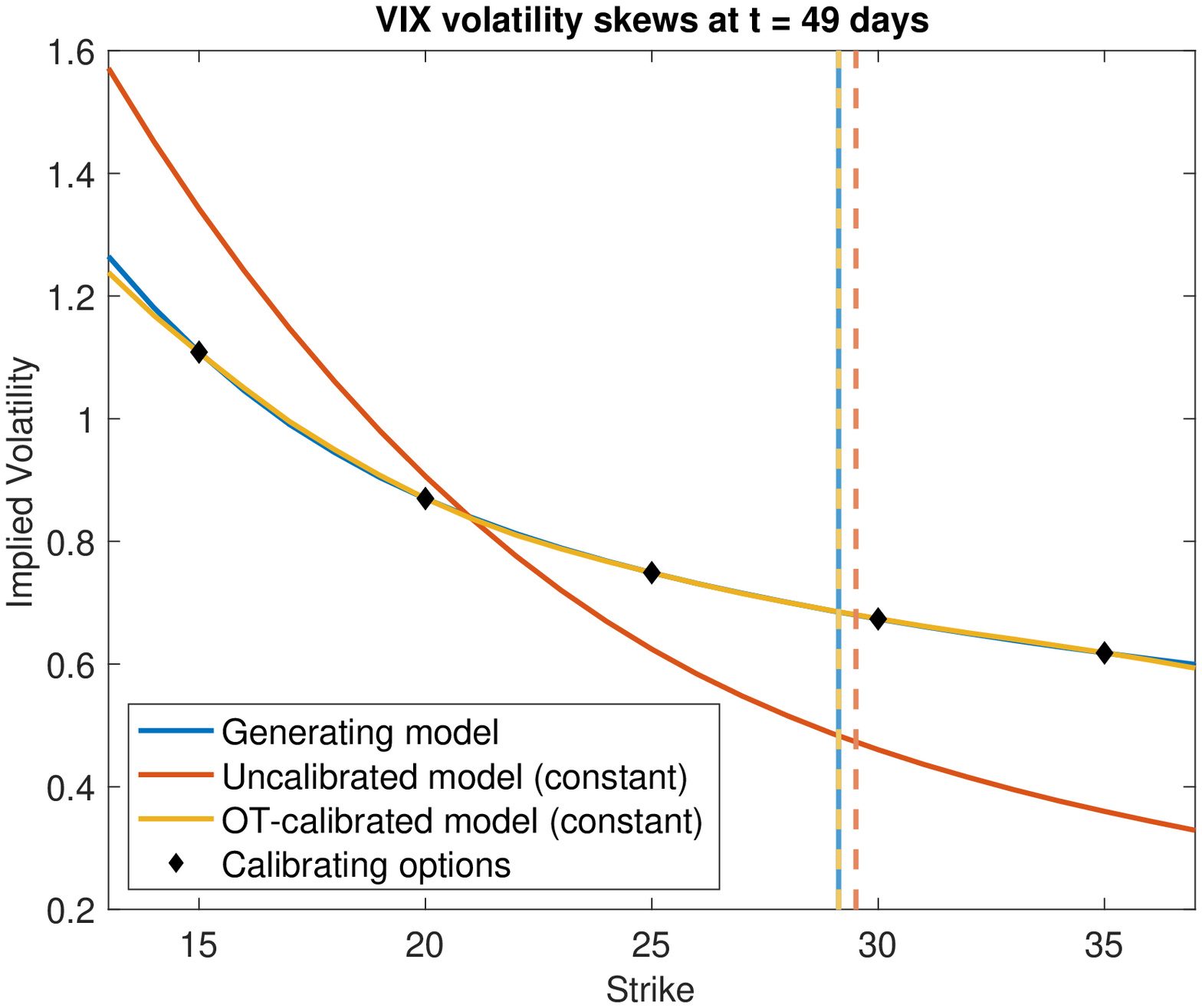}
   \end{minipage}
   \caption{The volatility skews of SPX options at $t_0-5$ days $=44$ days, SPX options at $T = 79$ days and VIX options at $t_0=49$ days for the simulated data example, including the implied volatility of the generating model, the uncalibrated constant reference model and the OT-calibrated model with a constant reference. The diamonds are the implied volatility of the calibrating options. The vertical lines are VIX futures prices.}
   \label{fig:vol_skews_constant}
\end{figure}

All numerical experiments are performed in Matlab (2020a) on a standard desktop with an i7-7700K CPU (4.5 GHz) and 32GB of RAM. The example of Heston reference takes 4 hours and the example of constant reference takes 10.7 hours. The reason that the latter example takes longer to complete is that as the constant reference value is very different from the generating model, it takes more iterations to smooth the volatility surfaces and skews by using the reference measure iteration method. We must acknowledge that our method is very computationally expensive. We plan to study on reducing the computational time in future research.

The calibration results are shown in Table \ref{table:result}, and the volatility skews are given in Figure \ref{fig:vol_skews_heston}--\ref{fig:vol_skews_constant}. We can see that the OT-calibrated models, both with the Heston reference and the constant reference, accurately capture the calibrating SPX options, VIX futures and VIX options prices. The errors, in implied volatility, of the SPX options are at most 1 basis point and of the VIX options are at most 10 basis points.

To verify if the model dynamics are correct, we perform a Monte Carlo simulation of $X$ with the Euler scheme, and the results are shown in Figure \ref{fig:sim_x1}--\ref{fig:sim_x2}. As demonstrated, $X^2_{T,T} \approx 0$ in all three models, so we consider the constraints $X^2_{T,T}=0$ $\bP$-a.s. are satisfied, and the model dynamics are correct. 

Regarding the robustness of the method, there is no doubt that the reference value has a significant influence on the model dynamics. In Figures \ref{fig:vol_skews_heston} and \ref{fig:vol_skews_constant}, the SPX model volatility skews show some differences between the ones with different reference values. In the intervals between any two adjacent option strikes, these difference are relative small, which is the result of the smoothing method. In the intervals that are less than the smallest strike and greater than the largest strike, these differences are relative large, because the model is penalised away from the reference values. Surprisingly, the VIX model volatility skews show only small differences. In Figures \ref{fig:sim_x1} and \ref{fig:sim_x2}, we note that the dynamics of the three models are different. In fact, the OT-calibrated model with the constant reference is very different from the other two models. We further display the volatility behaviour of the three models in Appendix \ref{app:beta}.

\begin{figure}[t]
\begin{center}
\includegraphics[width=\textwidth]{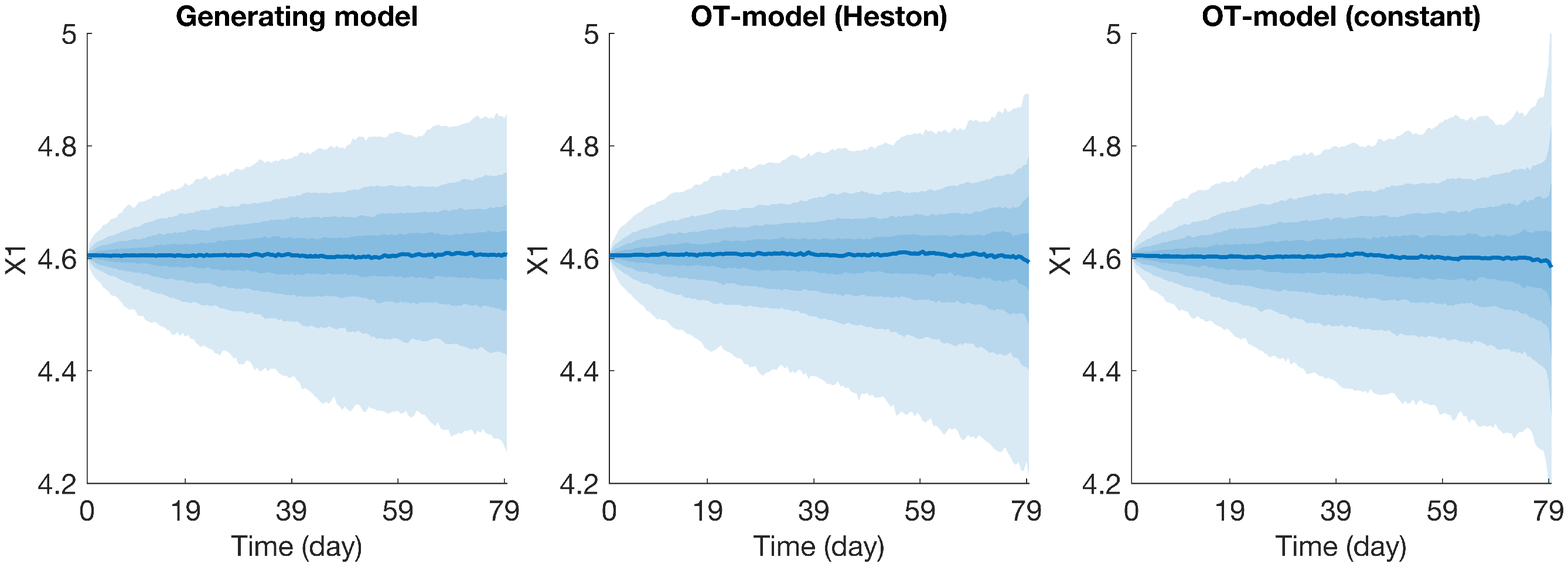}
\caption{The simulations of $X^1_t$ for the simulated data example, including the generating model, the OT-calibrated model with a Heston reference and the OT-calibrated model with a constant reference.}
\label{fig:sim_x1}
\end{center}
\end{figure}

\begin{figure}[t]
\begin{center}
\includegraphics[width=\textwidth]{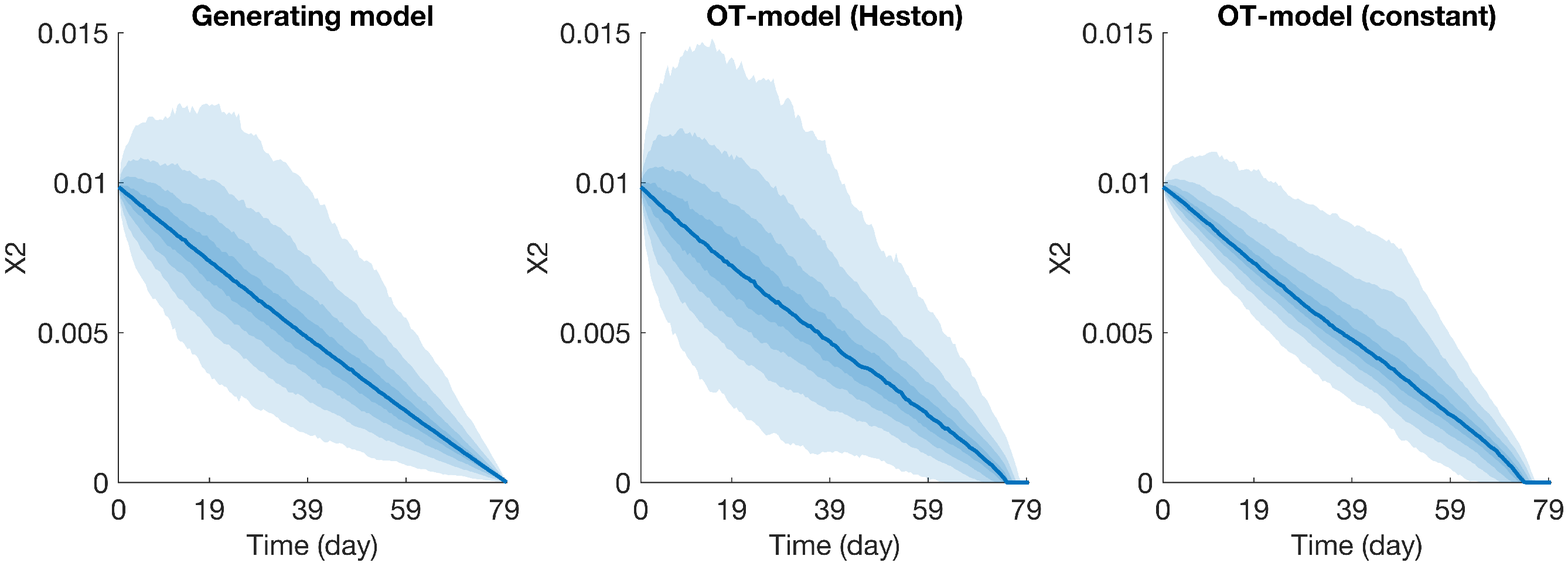}
\caption{The simulations of $X^2_{t,T}$ for the simulated data example, including the generating model, the OT-calibrated model with a Heston reference and the OT-calibrated model with a constant reference.}
\label{fig:sim_x2}
\end{center}
\end{figure}

\begin{landscape}
\begin{table}[p]
\centering
\resizebox{\hsize}{!}{%
\begin{tabular}{ccccccccc}
  \toprule
 & & & \multicolumn{2}{c}{Generating model} & \multicolumn{2}{c}{OT-model (Heston)} & \multicolumn{2}{c}{OT-model (constant)}  \\
 \cmidrule(lr){4-5} \cmidrule(lr){6-7} \cmidrule(lr){8-9}
 & Maturity & Strike & Price & IV & Model price & Model IV & Model price & Model IV  \\
  \midrule
 SPX call options & 44 days & 85 & 15.3513 & 0.3234 & 15.3514 (0.0001) & 0.3234 (0.0000) & 15.3512 (-0.0001) & 0.3234 (0.0000)  \\
 &  & 90 & 10.9298 & 0.3133 & 10.9300 (0.0002) & 0.3134 (0.0001) & 10.9297 (-0.0001) & 0.3133 (0.0000)  \\
 &  & 95 & 7.0999 & 0.3037 & 7.0989 (-0.0010) & 0.3036 (-0.0001) & 7.1000 (0.0001) & 0.3037 (0.0000)  \\
 &  & 100 & 4.1123 & 0.2950 & 4.1121 (-0.0002) & 0.2950 (0.0000) & 4.1118 (-0.0005) & 0.2949 (-0.0001)  \\
 &  & 105 & 2.0817 & 0.2874 & 2.0819 (0.0002) & 0.2875 (0.0001) & 2.0818 (0.0001) & 0.2874 (0.0000)  \\
 &  & 110 & 0.9061 & 0.2808 & 0.9068 (0.0007) & 0.2809 (0.0001) & 0.9063 (0.0002) & 0.2809 (0.0001)  \\
 &  & 115 & 0.3392 & 0.2758 & 0.3390 (-0.0002) & 0.2757 (-0.0001) & 0.3395 (0.0003) & 0.2758 (0.0000)  \\
 \cmidrule{2-9}
 & 79 days & 85 & 15.9829 & 0.3207 & 15.9832 (0.0003) & 0.3207 (0.0000) & 15.9836 (0.0007) & 0.3207 (0.0000)  \\
 &  & 90 & 11.8931 & 0.3108 & 11.8936 (0.0005) & 0.3109 (0.0001) & 11.8934 (0.0003) & 0.3108 (0.0000)  \\
 &  & 95 & 8.3453 & 0.3014 & 8.3457 (0.0004) & 0.3015 (0.0001) & 8.3456 (0.0003) & 0.3014 (0.0000)  \\
 &  & 100 & 5.4675 & 0.2928 & 5.4680 (0.0005) & 0.2928 (0.0000) & 5.4678 (0.0003) & 0.2928 (0.0000)  \\
 &  & 105 & 3.3174 & 0.2851 & 3.3182 (0.0008) & 0.2852 (0.0001) & 3.3188 (0.0014) & 0.2852 (0.0001)  \\
 &  & 110 & 1.8524 & 0.2784 & 1.8529 (0.0005) & 0.2785 (0.0001) & 1.8535 (0.0011) & 0.2785 (0.0001)  \\
 &  & 115 & 0.9533 & 0.2730 & 0.9539 (0.0006) & 0.2731 (0.0001) & 0.9539 (0.0006) & 0.2731 (0.0001)  \\
 \midrule
 VIX call options & 49 days & 15 & 14.3139 & 1.1086 & 14.3146 (0.0007) & 1.1094 (0.0008) & 14.3131 (-0.0008) & 1.1076 (-0.0010)  \\
 &  & 20 & 9.5850 & 0.8699 & 9.5856 (0.0006) & 0.8702 (0.0003) & 9.5854 (0.0004) & 0.8701 (0.0002)  \\
 &  & 25 & 5.4779 & 0.7489 & 5.4794 (0.0015) & 0.7494 (0.0005) & 5.4778 (-0.0001) & 0.7489 (0.0000)  \\
 &  & 30 & 2.5079 & 0.6735 & 2.5085 (0.0006) & 0.6737 (0.0002) & 2.5102 (0.0023) & 0.6741 (0.0006)  \\
 &  & 35 & 0.8639 & 0.6181 & 0.8632 (-0.0007) & 0.6179 (-0.0002) & 0.8652 (0.0013) & 0.6185 (0.0004)  \\
 \midrule
 VIX futures & 49 days &  & 29.1285 &  & 29.1292 (0.0007) &  & 29.1268 (-0.0017) &   \\
 \midrule
 Singular contract & 79 days &  & 0 &  & 5.34E-06 &  &  5.26E-08 &   \\
 \bottomrule
\end{tabular}
}
\caption{The calibration results of the simulated data example, including prices and implied volatility (IV) of the generating model, the OT-calibrated model with a Heston reference and the OT-calibrated model with a constant reference. The errors are shown in the parentheses. }
\label{table:result}
\end{table}
\end{landscape}

\subsection{Market data}\label{sec:marketdata}

To further test the effectiveness of our method, we calibrate the model to the market data as of September 1st, 2020. 
\begin{remark}
For simplicity, we have assumed that the interest rates and dividends are null, and the spot price is a martingale under the risk-neutral measure. However, this assumption does not apply to the market data. To overcome this issue, we let $X^1$ be the logarithm of the T-forward price of the SPX index instead of the spot price. Then, we are interested in T-forward measures $\bP\in\cP^1$ under which $\exp(X^1)$ is a martingale.
\end{remark}
The market data consists of monthly SPX options maturing at 17 days and 45 days and monthly VIX futures and options maturing at 15 days. The model is optimised with a Heston reference \eqref{eq:ref_heston} with parameters given in Table \ref{table:params_market}. The parameters are obtained by (roughly) calibrating a standard Heston model to the SPX option prices. It should be noted that, even with these parameters, the VIX skew generated by the Heston reference model is very unrealistic. 
Numerically, we have also observed that the convergence is sensitive to $\bar\beta$. Therefore, we apply the reference measure iteration method, developed in Section \ref{sec:smoothing}, to iteratively improve the reference value. The total computation time (including the reference measure iterations) is 11 hours. From a practical perspective, one way to reduce the computation time is to set the reference value to a pre-calibrated $\beta$. Nevertheless, we leave the task of finding better reference values and reducing the computation time for future research.

\begin{table}[h]
\noindent
\begin{tabularx}{\textwidth}{XXXXXXX}
  \toprule
  Parameter & $X^1_0$ & $X^2_{0,T}$ & $\bar\kappa$ & $\bar\theta$ & $\bar\omega$ & $\bar\eta$  \\
  \midrule
  Value & 8.17 & 0.0048 & 4.99 & 0.038 & 0.52 & -0.99 \\
  \bottomrule
\end{tabularx}
\caption{Parameter values for the market data example.}
\label{table:params_market}
\end{table}

The OT-calibrated model volatility skews are plotted in Figure \ref{fig:vol_skews_market}, and the simulation of $X$ is given in Figure \ref{fig:sim_x_market}. From the plots, we can see that the OT-calibrated model accurately captures the market data while keeping $X^2_{T,T}=0$ $\bP$-a.s. satisfied. The volatility behaviour is displayed in Appendix \ref{app:beta_market}.

\begin{remark}
Theoretically, the choice of $\bar\beta$ should affect the calibration result, but not the feasibility thereof. If there is only one model that calibrates to the constraints, e.g., when calibrating to option prices with all strikes available, the result will not depend on $\bar\beta$. The degree of freedom in the choice of $\bar\beta$ and the cost function allow us to calibrate a model even when option constraints are sparse.
\end{remark}

\begin{remark}
In Figure \ref{fig:sim_x_market}, we observe a rapid distribution change of $X^2$ after the VIX options expiry. Recall that our $X^2$ is the forward expected quadratic variation of $X^1$, which is indeed the scaled variance swap. Since the market prices are from the true VIX options, this rapid distribution change could be caused by the discrepancy between the VIX value and the variance swap which does not have a listed market. This discrepancy is well known to practitioners. In our approach, the VIX is inferred from the true log-contract, coherently with the variance swap. This approximation could lead to a slight incoherence with observed market prices. We have indeed observed that the convergence of the calibration was highly sensitive to $X^2_0$. Note that the same approach still works if we replace $X^2$ by the combination of vanilla options that is used in the CBOE VIX calculation, which then allows us to potentially get better values of $X^2_0$ from market prices. However, we did not model $X^2$ that way here for the simplicity of presentation.
\end{remark}

\begin{figure}[ht!]
   \begin{minipage}{0.49\textwidth}
     \centering
     \includegraphics[width=1.0\linewidth]{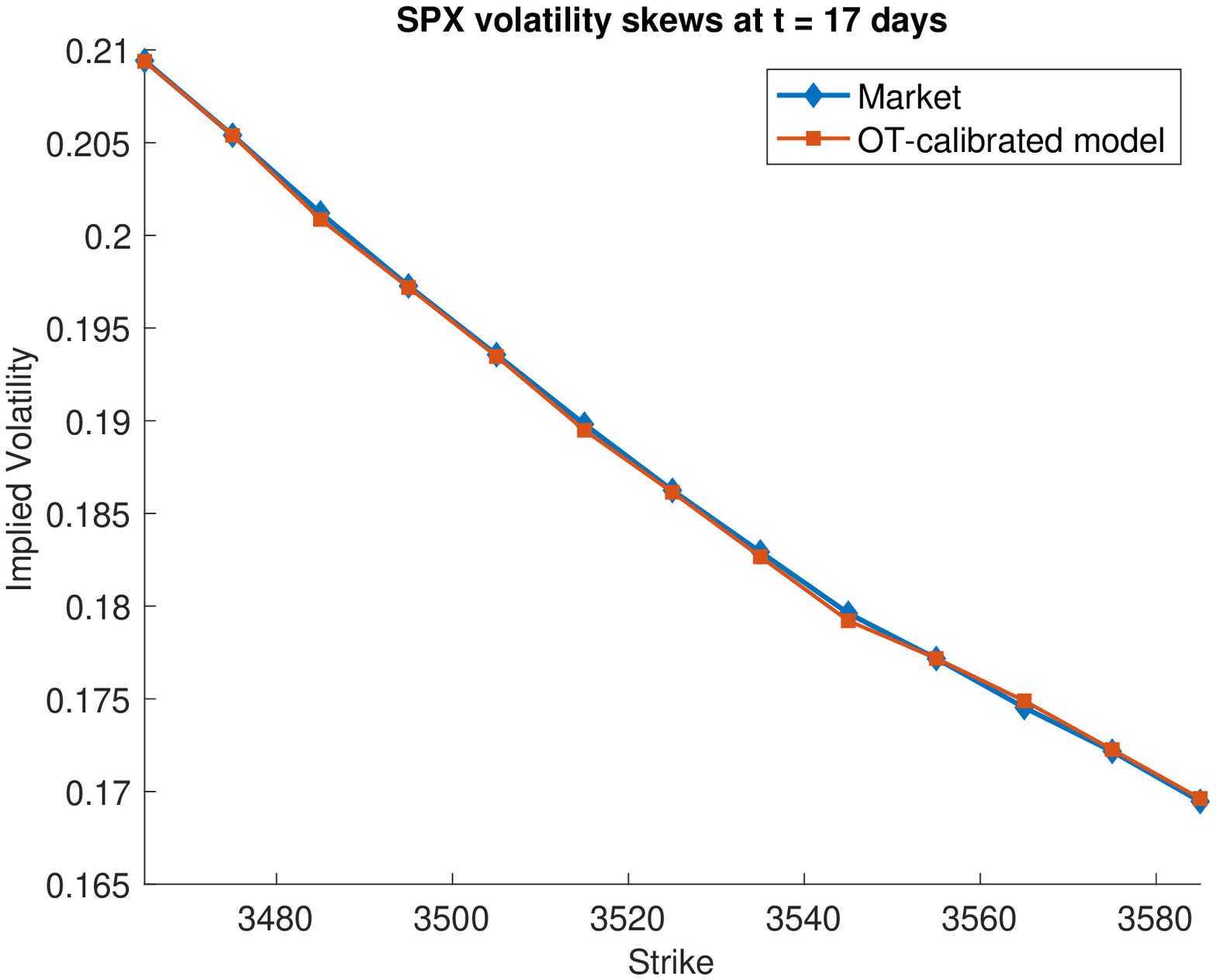}
   \end{minipage}\hfill
   \begin{minipage}{0.49\textwidth}
     \centering
     \includegraphics[width=1.0\linewidth]{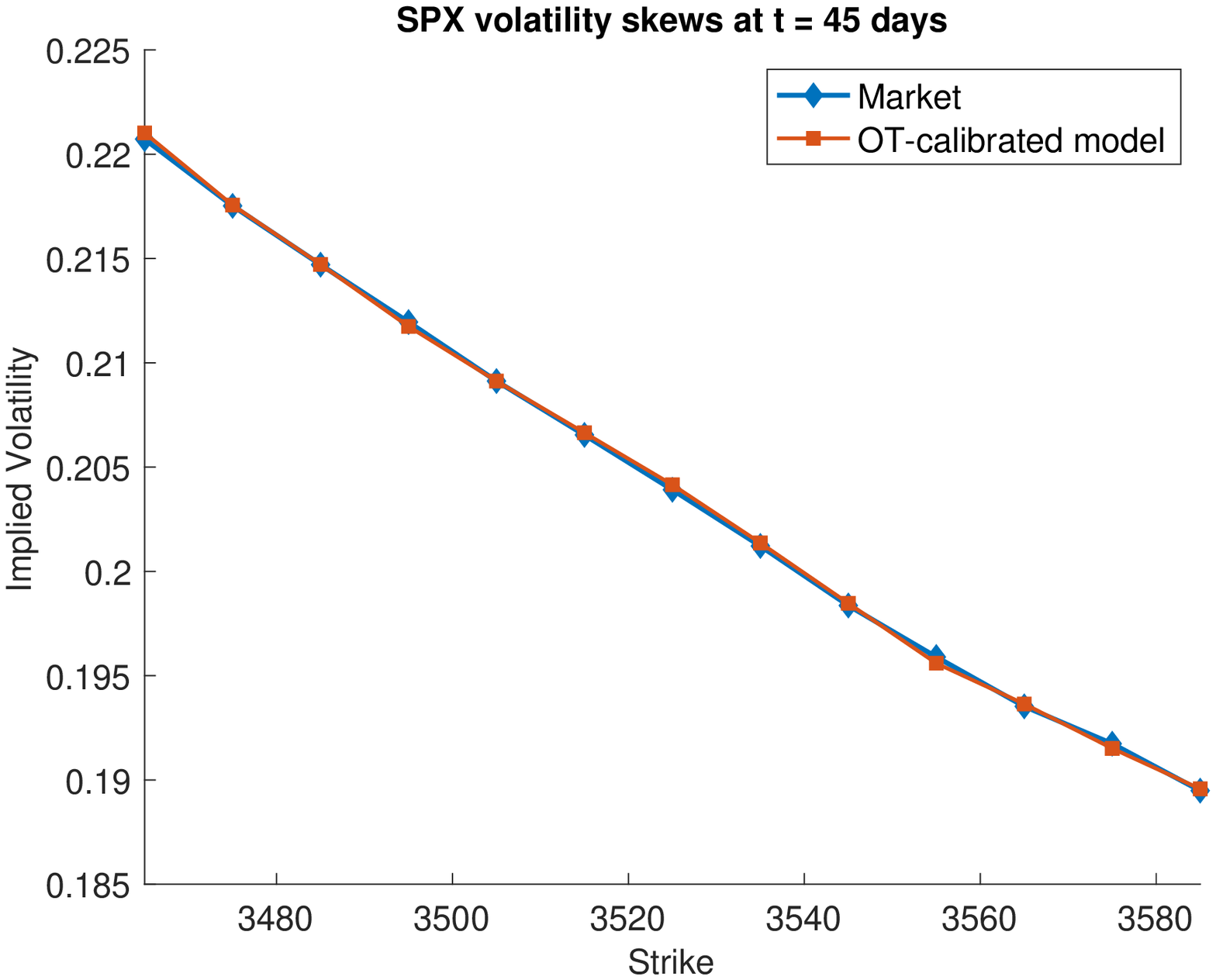}
   \end{minipage}
   \centering
   \begin{minipage}{0.49\textwidth}
     \centering
     \includegraphics[width=1.0\linewidth]{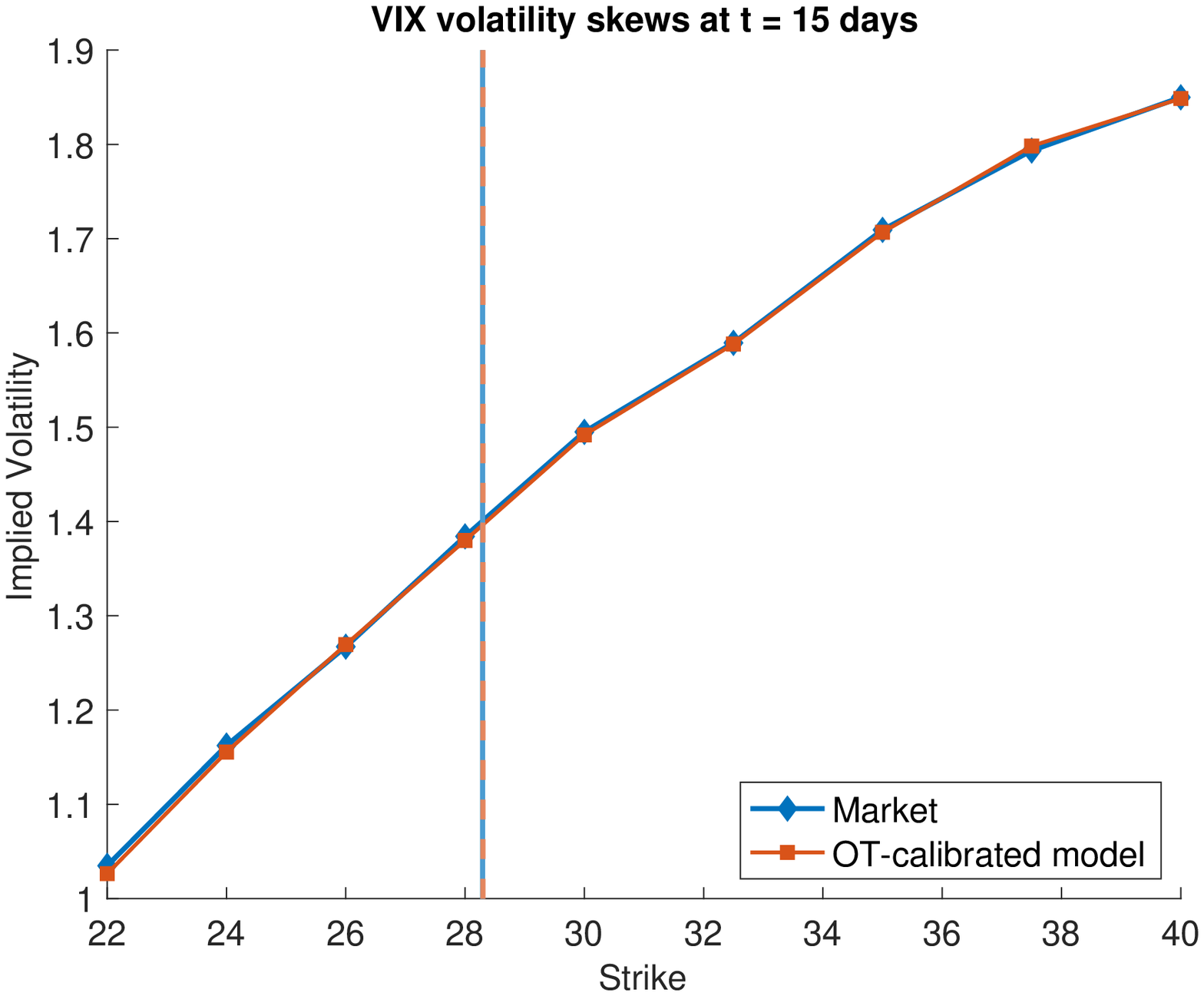}
   \end{minipage}
   \caption{Approximated OT-calibrated model volatility skews of SPX options at $t_0+2$ days $=17$ days, SPX options at $T = 45$ days and VIX options at $t_0=15$ days in the market data example. The vertical lines are VIX futures prices. Markers correspond to computed prices which are then interpolated with a piece-wise linear function.}
   \label{fig:vol_skews_market}
\end{figure}

\begin{figure}[ht!]
\begin{center}
\includegraphics[width=\textwidth]{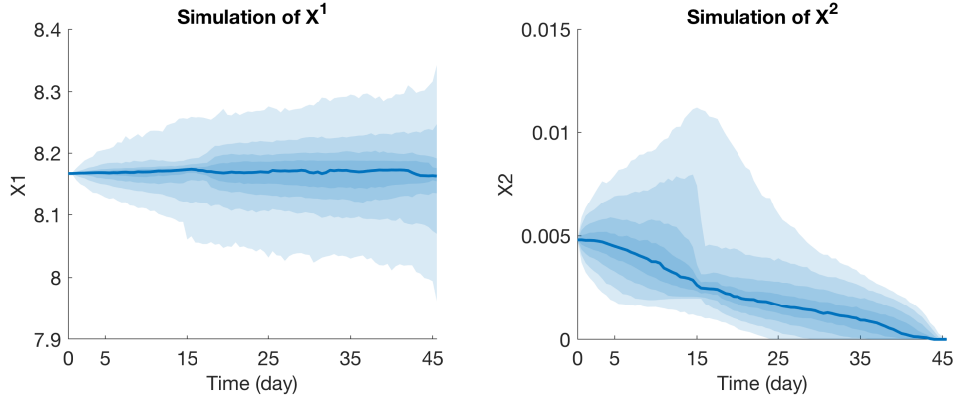}
\caption{The simulations of the OT-calibrated model $X$ in the market data example. }
\label{fig:sim_x_market}
\end{center}
\end{figure}

\section*{Acknowledgements}

The Centre for Quantitative Finance and Investment Strategies has been supported by BNP Paribas. The last author is supported by an Australian Government Research Training Program (RTP) Scholarship. Part of this research was carried out when J. Ob\l\'oj was visiting Monash University and the Sydney Mathematical Research Institute and he gratefully acknowledges their hospitality.

\appendix
\section{The convex conjugate \texorpdfstring{$F^*$}{F*}}
Given $a\in\R^2, b\in\bS^2$ and $\bar\beta\in\bS^2$, define
\eqn{
  A &:= \bar\beta_{11} + \demi b_{11} - \frac{1}{4}a_1 - \frac{1}{4}a_2, \\
  B &:= \bar\beta_{12} + \demi b_{12}, \\ 
  C &:= \bar\beta_{22} + \demi b_{22}, \\
  M &:= \left[ \begin{array}{cc} A & B \\ B & C \end{array}\right].
}
We also define
\eqn{
  x'_+ &:= \frac{A-C}{4} + \frac{A^2-C^2}{4\sqrt{4B^2+(A-C)^2}}, & x'_- &:= \frac{A-C}{4} - \frac{A^2-C^2}{4\sqrt{4B^2+(A-C)^2}}, \\
  y'_+ &:= \frac{B}{2} + \frac{B(A+C)}{2\sqrt{4B^2+(A-C)^2}}, & y'_- &:= \frac{B}{2} - \frac{B(A+C)}{2\sqrt{4B^2+(A-C)^2}},
}
and define
\eqn{
  \lambda_+ &:= \left[ \begin{array}{cc} x'_+ + \sqrt{(x'_+)^2+(y'_+)^2} & y'_+ \\ y'_+ & -x'_+ + \sqrt{(x'_+)^2+(y'_+)^2} \end{array}\right], \\
  \lambda_- &:= \left[ \begin{array}{cc} x'_- + \sqrt{(x'_-)^2+(y'_-)^2} & y'_- \\ y'_- & -x'_- + \sqrt{(x'_-)^2+(y'_-)^2} \end{array}\right].
}
\begin{lemma}\label{lemma:f_conjugate}
The convex conjugate of $F$ is
\eqn{
  F^*(a,b) = (b_{11}-\demi a_1-\demi a_2)\beta^*_{11} + 2b_{12}\beta^*_{12} + b_{22}\beta^*_{22} - \sum_{i,j=1}^2 (\beta^*_{ij} - \bar\beta_{ij})^2,
}
where the values of $\beta^*$ are determined as follows:
\begin{enumerate}
  \item If $M\in\bS^2_+$, then $\beta^* = M$.
  \item If $AC\geq B^2$ and $A+C<0$, then $\beta^*$ is the null matrix.
  \item Otherwise, 
    \eqn{
      \beta^* = \argmin_{\beta\in\{\lambda_+,\lambda_-\}} (\beta_{11}-A)^2 + 2(\beta_{12}-B)^2 + (\beta_{22}-C)^2.
    }
\end{enumerate}
\end{lemma}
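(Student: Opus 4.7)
My plan is to reduce the computation of $F^*$ to a Frobenius projection onto $\bS^2_+$, then identify the three cases of the lemma with three cases of the projection. First, since $F=+\infty$ off $\Gamma$, the sup defining $F^*$ is over $\beta \in \bS^2_+$ with $\alpha_1 = \alpha_2 = -\demi\beta_{11}$, so the integrand becomes $-\demi\beta_{11}(a_1+a_2) + \beta_{11}b_{11} + 2\beta_{12}b_{12} + \beta_{22}b_{22} - \sum_{i,j=1}^2(\beta_{ij}-\bar\beta_{ij})^2$. Completing the square in each of $\beta_{11},\beta_{12},\beta_{22}$ separately rewrites this, up to a constant depending only on $(a,b,\bar\beta)$, as $-g(\beta)$ with $g(\beta) := (\beta_{11}-A)^2 + 2(\beta_{12}-B)^2 + (\beta_{22}-C)^2$; the $A,B,C$ of the statement are precisely the location of the unconstrained (in $\bS^2$) maximiser. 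Thus computing $F^*(a,b)$ reduces to the Frobenius projection $\beta^*$ of $M = \begin{pmatrix}A & B \\ B & C\end{pmatrix}$ onto the convex cone $\bS^2_+$, and substituting $\beta^*$ back into the reduced objective will yield the stated formula for $F^*$.

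The projection itself is classical: writing $M = \mu_+ v_+ v_+^\intercal + \mu_- v_- v_-^\intercal$ with orthonormal eigenvectors $v_\pm$ and $\mu_+ \geq \mu_-$, the Frobenius projection of $M$ onto $\bS^2_+$ equals $\max(\mu_+,0) v_+ v_+^\intercal + \max(\mu_-,0) v_- v_-^\intercal$. Translating eigenvalue-sign conditions into conditions on $A,B,C$ via $\operatorname{tr} M = A+C$ and $\det M = AC - B^2$ produces the three cases of the lemma: $M \in \bS^2_+$, giving $\beta^* = M$ (case 1); $M \preceq 0$, equivalently $AC \geq B^2$ and $A+C \leq 0$, matching case 2 up to the trivial overlap $M = 0$ which is then absorbed into case 1; and otherwise $\mu_+ > 0 > \mu_-$, giving a rank-one $\beta^* = \mu_+ v_+ v_+^\intercal$ (case 3).

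For case 3 I need to check that the closed-form expression reproduces $\mu_+ v_+ v_+^\intercal$. With $\Delta := \sqrt{4B^2 + (A-C)^2}$ and $\mu_\pm = (A+C\pm\Delta)/2$, direct rearrangement gives $x'_\pm = (A-C)\mu_\pm/(2\Delta)$ and $y'_\pm = B\mu_\pm/\Delta$, hence $\sqrt{(x'_\pm)^2+(y'_\pm)^2} = |\mu_\pm|/2$. Using $A - C + \Delta = 2(\mu_+ - C)$ and $C - A + \Delta = 2(\mu_+ - A)$, the lemma's matrix $\lambda_+$ rearranges to $(\mu_+/\Delta)\begin{pmatrix}\mu_+ - C & B \\ B & \mu_+ - A\end{pmatrix}$, which is rank-one PSD (determinant zero by the characteristic polynomial), has nonzero column proportional to $v_+$, and trace $\mu_+$, so equals $\mu_+ v_+ v_+^\intercal$. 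An analogous computation (using $|\mu_-| = -\mu_-$ in case 3) identifies the lemma's matrix $\lambda_-$ with $|\mu_-|\,v_+ v_+^\intercal$; both candidates therefore lie on the ray $\{t\,v_+ v_+^\intercal : t \geq 0\}$. By orthogonality of $v_+ v_+^\intercal$ and $v_- v_-^\intercal$ in the Frobenius inner product, $g(t\,v_+ v_+^\intercal) = (t - \mu_+)^2 + \mu_-^2$, so $g(\lambda_+) - g(\lambda_-) = -(\mu_+ + \mu_-)^2 \leq 0$ and the argmin always selects $\lambda_+$, which is the correct projection. The main technical obstacle is the sign-tracking in this algebraic step: the absolute value in $\sqrt{(x'_\pm)^2 + (y'_\pm)^2} = |\mu_\pm|/2$ absorbs the sign change of $\mu_-$ in case 3, and this is what forces the lemma's matrix $\lambda_-$ to lie along $v_+ v_+^\intercal$ rather than $v_- v_-^\intercal$, so that the two-candidate argmin phrasing correctly picks out the true projection.
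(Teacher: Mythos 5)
Your proof is correct but follows a genuinely different route from the paper's. After the shared reduction to projecting $M$ onto $\bS^2_+$ in the Frobenius norm, the paper rotates $(\beta_{11},\beta_{12},\beta_{22})$ by $45^\circ$ about the $\beta_{12}$-axis so that the PSD cone becomes the Lorentz cone $\{z'\ge 0,\ x'^2+y'^2\le z'^2\}$, and handles the three cases geometrically as interior, vertex, and lateral boundary, exhibiting two boundary stationary points and simply instructing the reader to pick the one with smaller objective. You instead invoke the classical spectral-truncation formula for the Frobenius projection onto $\bS^2_+$ and translate the three cases into eigenvalue-sign conditions via $\operatorname{tr}M$ and $\det M$. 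This makes the case split conceptually immediate, absorbs the $M=0$ overlap between cases 1 and 2 cleanly, and --- a genuine sharpening the paper does not assert --- shows that the argmin in case 3 always selects $\lambda_+$, since $g(\lambda_+)-g(\lambda_-)=-(\mu_++\mu_-)^2\le 0$. One caution about the sign-tracking you flagged yourself: the intermediate identity $x'_\pm=(A-C)\mu_\pm/(2\Delta)$ is only correct for the $+$ sign; from the lemma's definitions one gets $x'_-=-(A-C)\mu_-/(2\Delta)$ and $y'_-=-B\mu_-/\Delta$. The sign cancels in $\sqrt{(x'_-)^2+(y'_-)^2}=|\mu_-|/2$, but it does not cancel in the diagonal entries $x'_-\pm\sqrt{(x'_-)^2+(y'_-)^2}$ of $\lambda_-$; with the sign as you wrote it one would land on $\lambda_-=|\mu_-|\,v_-v_-^\intercal$ rather than $|\mu_-|\,v_+v_+^\intercal$, which would break the ``same ray'' step. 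With the corrected sign the stated conclusion $\lambda_-=|\mu_-|\,v_+v_+^\intercal$ does hold, and the rest of your argument goes through.
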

\begin{proof}
By definition, the convex conjugate of $F$ is given by
\eqn{
  F^*(a,b) &= \sup_{\beta\in\bS^2_+}\{ -\demi a_1\beta_{11} - \demi a_2\beta_{11} + b_{11}\beta_{11} + 2b_{12}\beta_{12} + b_{22}\beta_{22} - \sum_{i,j=1}^2 (\beta_{ij} - \bar\beta_{ij})^2\} \\
    &= - \inf_{\beta\in\bS^2_+}\{ (\beta_{11} -A)^2 + 2(\beta_{12}-B)^2+(\beta_{22}-C)^2 \} + (A^2-\bar\beta_{11}^2) + 2(B^2-\bar\beta_{12}^2) + (C^2-\bar\beta_{22}^2).
}
Finding the $\beta$ that achieves the above infimum is equivalent to solving
\eq{\label{inf:A1}
  (\beta_{11},\beta_{12},\beta_{22}) = \arginf_{(x,y,z)\in\R_{\geq 0}\times\R\times\R_{\geq 0}}\{ (x-A)^2 + 2(y-B)^2 + (z-C)^2 \mid xz\geq  y^2 \}.
}
In order to solve this problem, let us rotate the $xyz$-axes around $y$-axis clockwise through an angle of $45^{\circ}$ into $x'y'z'$-axes, which can be described by the linear transformation:
\eqn{
  \begin{pmatrix}x' \\ y' \\ z'\end{pmatrix} = \begin{pmatrix}\demi & 0 & -\demi \\ 0 & 1 & 0 \\ \demi & 0 & \demi\end{pmatrix} \begin{pmatrix}x \\ y \\ z\end{pmatrix}.
}
The inverse transformation is
\eqn{
  \begin{pmatrix}x \\ y \\ z\end{pmatrix} = \begin{pmatrix}1 & 0 & 1 \\ 0 & 1 & 0 \\ -1 & 0 & 1\end{pmatrix} \begin{pmatrix}x' \\ y' \\ z'\end{pmatrix}.
}
In terms of $(x',y',z')$, the infimum in (\ref{inf:A1}) can be reformulated as
\eq{\label{inf:A2}
  \inf_{(x',y',z')\in W} 2(x'-\xbar')^2 + 2(y'-\ybar')^2 + 2(z'-\zbar')^2,
}
where $(\xbar', \ybar', \zbar'):=(\demi A-\demi C,B,\demi A+\demi C)$, and $W$ is a convex cone defined as 
\eqn{
  W = \{ (x',y',z')\in\R^3 \mid z'\geq 0,\, x'^2+y'^2 \leq z'^2 \}.
}
In the $x'y'z'$-axes, the above problem can be simply described as finding the minimum Euclidean distance from the point $(\xbar',\ybar',\zbar')$ to $W$. There are three cases:
\begin{enumerate}[label=(\alph*)]
  \item If $(\xbar',\ybar',\zbar')\in W$, the solution is $(x',y',z') = (\xbar',\ybar',\zbar')$.
  \item If $\xbar'^2+\ybar'^2 \leq \zbar'^2$, but $\zbar'<0$. Then the solution should be on the boundary $z'=0$, which also implies that $x'=y'=0$.
  \item Otherwise, the solution must be on the boundary of W:
    \eqn{
      \partial W = \{ (x',y',z')\in\R^3 \mid z'\geq 0,\, x'^2+y'^2 = z'^2 \}.
    }
    By substituting $z' = \sqrt{x'^2+y'^2}$ into (\ref{inf:A2}) and solving the infimum, we find two stationary points:
    \eqn{
      (x'_+,y'_+,z'_+) &= \left(\frac{\xbar'}{2} + \frac{\xbar'\zbar'}{2\sqrt{\xbar'^2+\ybar'^2}}, \frac{\ybar'}{2} + \frac{\ybar'\zbar'}{2\sqrt{\xbar'^2+\ybar'^2}}, \sqrt{(x_+')^2+(y_+')^2} \right), \\
      (x'_-,y'_-,z'_-) &= \left(\frac{\xbar'}{2} - \frac{\xbar'\zbar'}{2\sqrt{\xbar'^2+\ybar'^2}}, \frac{\ybar'}{2} - \frac{\ybar'\zbar'}{2\sqrt{\xbar'^2+\ybar'^2}}, \sqrt{(x_-')^2+(y_-')^2} \right).
    }
    One of the stationary points achieves the infimum. Thus, we choose the one with the smaller objective value.
\end{enumerate} 
Transforming the above solutions back to the $xyz$-axes through the inverse transformation and replacing $(x,y,z)$ by $(\beta_{11},\beta_{12},\beta_{22})$, we obtain the desired result.
\end{proof}

\clearpage\section{Algorithm} \label{appendix:algo}

Let $\pi^N:=\{t_k:0\leq k\leq N\}$ be a discretisation of $[0,T]$ such that $0=t^0<t^1<\ldots<t^{N}=T$. We assume that each of $t_0$ and $\tau_i, i=1,\ldots,m$ coincides with some value in $\pi^N$. Denote by $\epsilon_1$ the tolerance of the maximum of the gradients (\ref{eq:grad_spx})--(\ref{eq:grad_singular}), and denote by $\epsilon_2$ the tolerance for the policy iteration. Recall that $\epsilon_1$ has an alternative interpretation as the tolerance of the maximum error between the calibrating prices and the model prices. In the numerical example presented in Section \ref{sec:num_exp}, $\epsilon_1=10^{-4}$ and $\epsilon_2=10^{-8}$. The numerical method described in Section \ref{sec:num_method} is summarised as the following algorithm.

\begin{algorithm}[!ht]
\DontPrintSemicolon
\SetNoFillComment
\setcounter{AlgoLine}{0}
  Set an initial $(\lambda^{SPX},\lambda^{VIX,f}, \lambda^{VIX},\lambda^\xi)$ \\
  \Do{The maximum of the gradients (\ref{eq:grad_spx}) to (\ref{eq:grad_singular}) is greater than $\epsilon_1$}{
    \tcc{Solving the HJB equation}
    \For{$k = N-1,\ldots,0$}{
      \tcc{Terminal conditions}
      \If{$\exists i = 1,\ldots,m ,\; t_{k+1} = \tau_i $}{
        $\phi_{t_{k+1}} \gets \phi_{t_{k+1}} + \sum_{i=1}^m \lambda_i^{SPX} G_i \mathds{1}(t_{k+1}=\tau_i)$ \tcp*{SPX options}
      }
      \If{$t^{k+1} = t_0$}{
        $\phi_{t^{k+1}} \gets \phi_{t^{k+1}} + \lambda^{VIX,f} J$ \tcp*{VIX futures}
        $\phi_{t^{k+1}} \gets \phi_{t^{k+1}} + \sum_{i=1}^n\lambda_i^{VIX} (H_i\circ J)$ \tcp*{VIX options}
      }
      \If{$t^{k+1} = T$}{
        $\phi_{t^{k+1}} \gets \phi_{t^{k+1}} + \lambda^\xi \xi$ \tcp*{Singular contract}
      }
      \tcc{Policy iteration}
      $\phi_{t_k}^{new} \gets \phi_{t_{k+1}}$ \\
      \Do{$\norm{\phi_{t_k}^{new} - \phi_{t_k}^{old}}_\infty > \epsilon_2$}{
          $\phi_{t_k}^{old} \gets \phi_{t_k}^{new}$ \\
          Approximate $\beta^*$ by Lemma \ref{lemma:f_conjugate} with $\phi_{t_k}^{old}$  \\
          Solve the HJB equation (\ref{hjb:t0_T}) or (\ref{hjb:0_t0}) with $\beta^*$ as a linearised PDE by the standard implicit finite difference method, and set the solution as $\phi_{t_k}^{new}$
      }
      $\phi_{t_k} \gets \phi_{t_{k}}^{new}$
    }
    \tcc{Model prices and gradients}
    Calculate the model prices by solving equations \eqref{pde:pricing} by the ADI method \\
    Calculate the gradients (\ref{eq:grad_spx}) to (\ref{eq:grad_singular})  \\
    Update $(\lambda^{SPX},\lambda^{VIX,f}, \lambda^{VIX},\lambda^\xi)$ by the L-BFGS algorithm
  }
 \caption{The joint calibration algorithm}
 \label{algo}
\end{algorithm}

\clearpage\section{The diffusion process \texorpdfstring{$\beta$}{Beta} for the simulated data example}
\label{app:beta}

\begin{figure}[!ht]
\begin{center}
\includegraphics[width=0.95\textwidth]{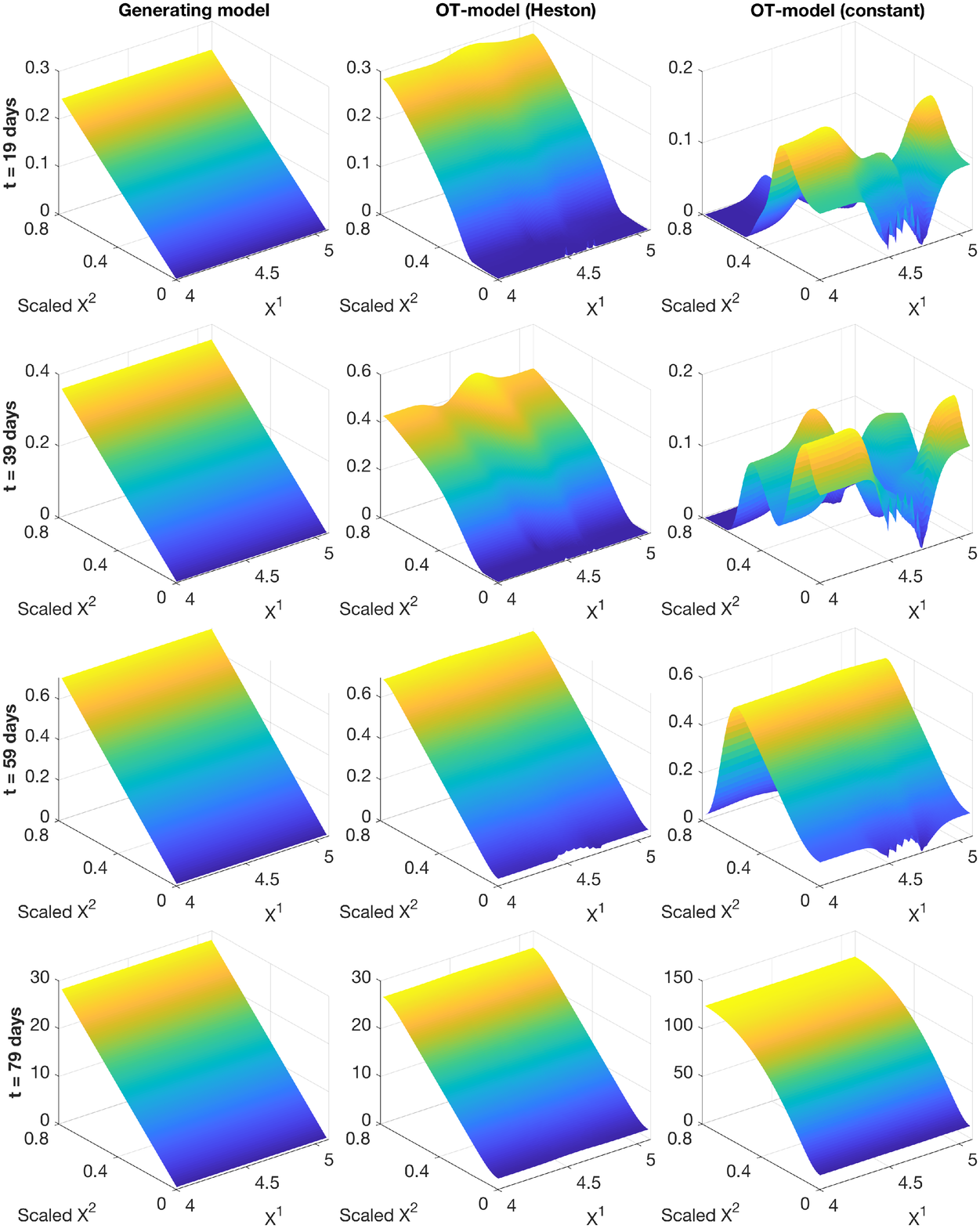}
\caption{The functions $\beta_{11}(t,X^1,X^2)$ of the generating model, the OT-calibrated model with a Heston reference and the OT-calibrated model with a constant reference for the simulated data example.}
\end{center}
\end{figure}

\clearpage
\begin{figure}[!ht]
\begin{center}
\includegraphics[width=1.0\textwidth]{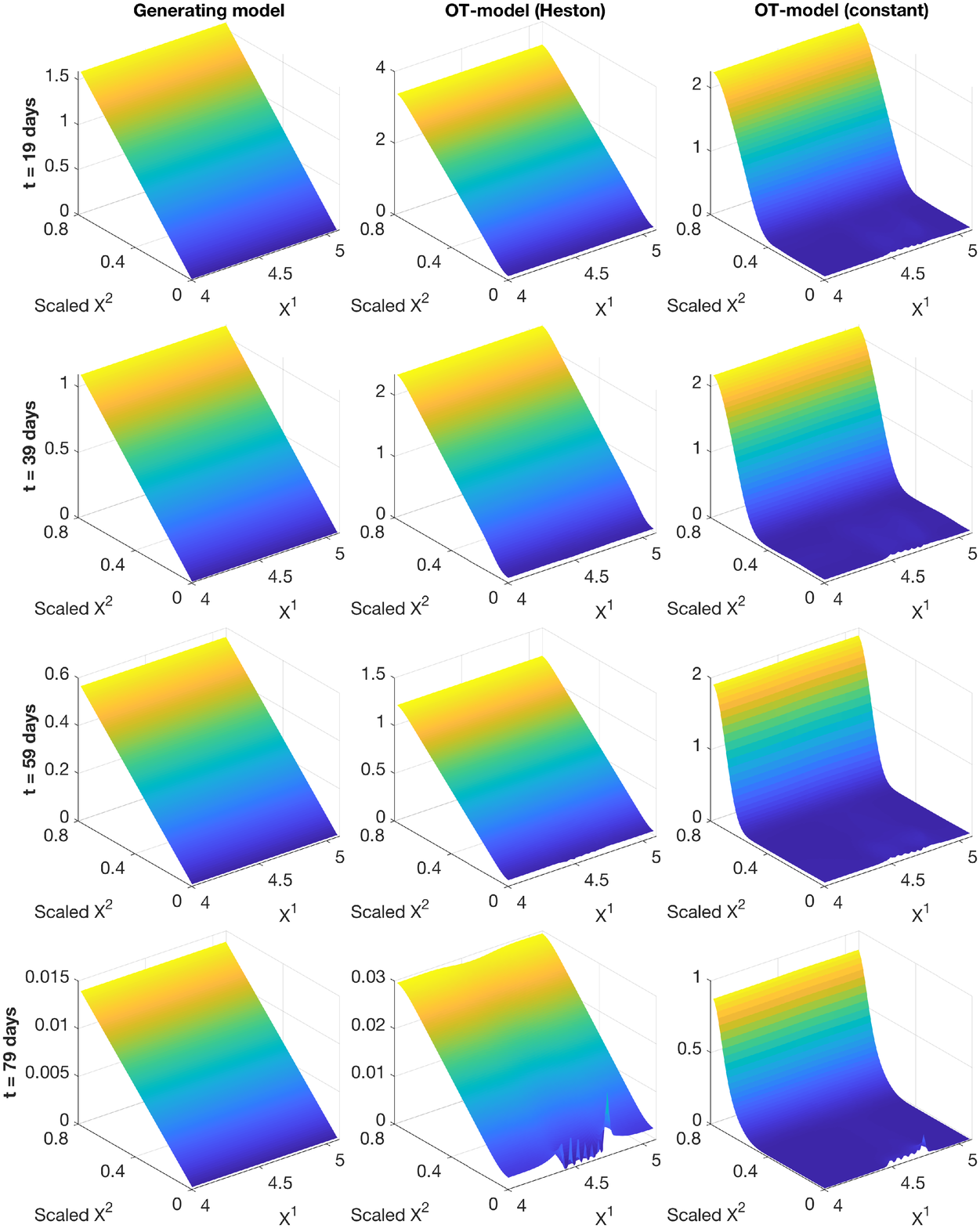}
\caption{The functions $\beta_{22}(t,X^1,X^2)$ of the generating model, the OT-calibrated model with a Heston reference and the OT-calibrated model with a constant reference for the simulated data example.}
\end{center}
\end{figure}

\clearpage
\begin{figure}[!ht]
\begin{center}
\includegraphics[width=1.0\textwidth]{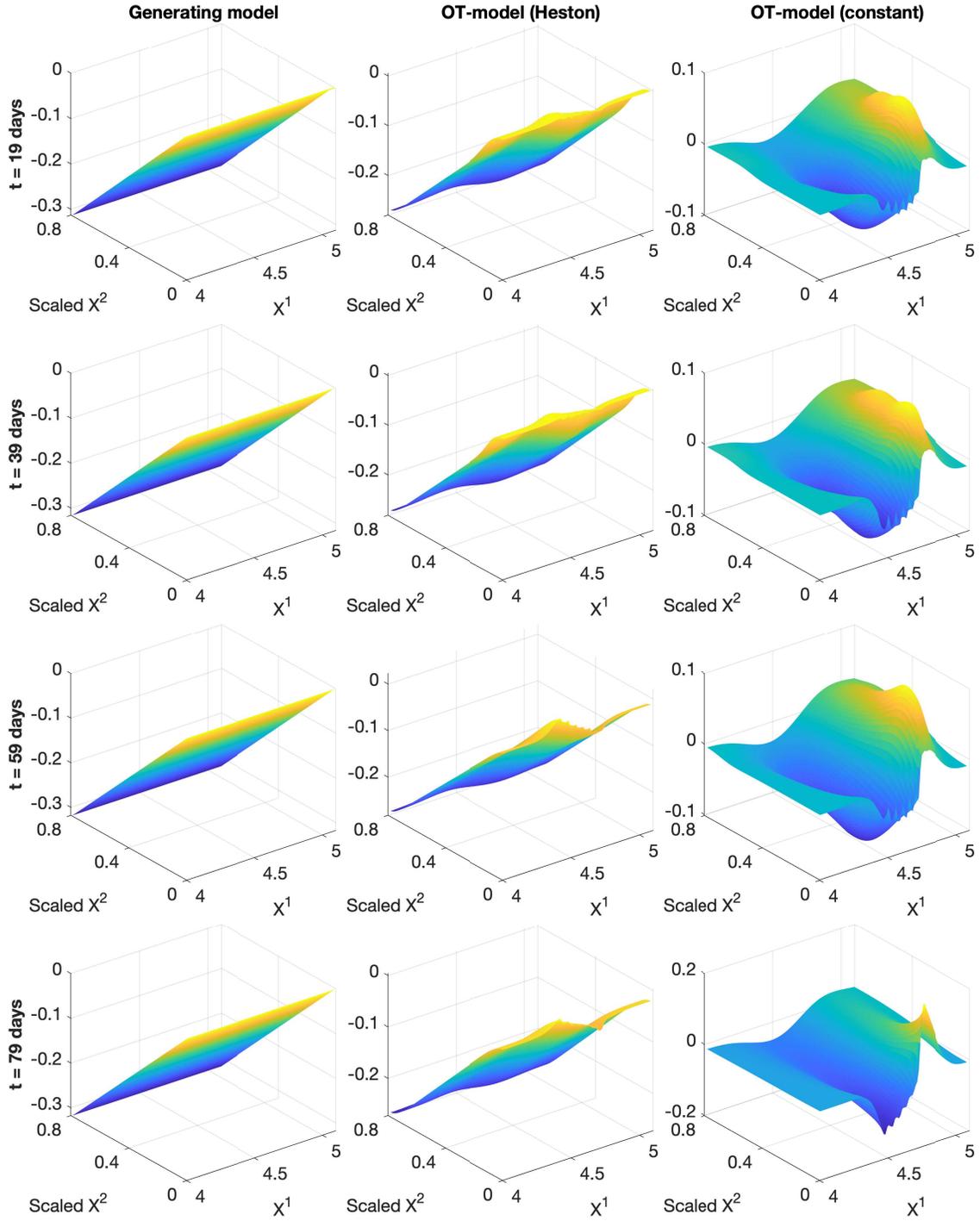}
\caption{The functions $\beta_{12}(t,X^1,X^2)$ of the generating model, the OT-calibrated model with a Heston reference and the OT-calibrated model with a constant reference for the simulated data example.}
\end{center}
\end{figure}

\clearpage\section{The diffusion process \texorpdfstring{$\beta$}{Beta} for the market data example}
\label{app:beta_market}

\begin{figure}[!ht]
\begin{center}
\includegraphics[width=0.95\textwidth]{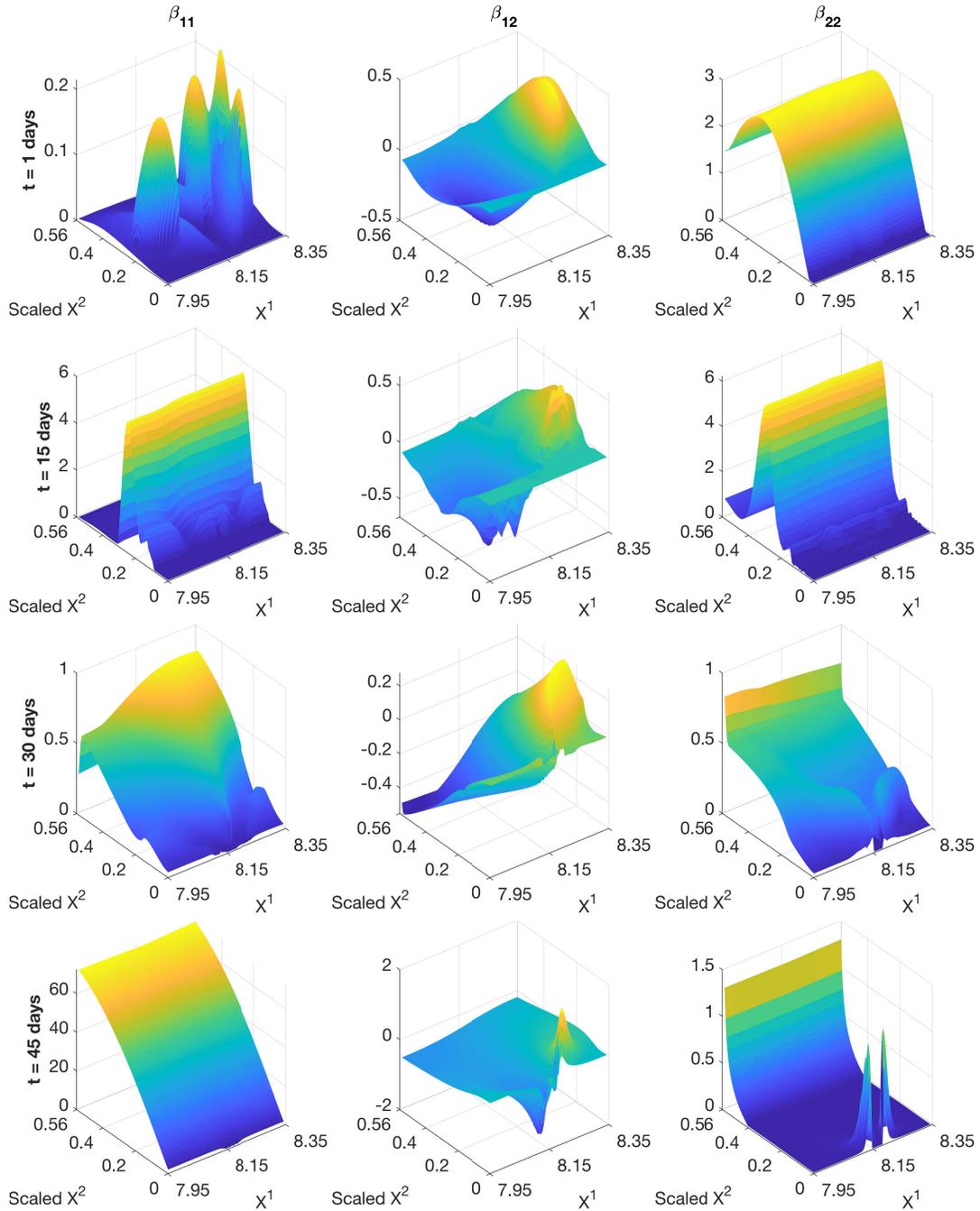}
\caption{The functions $\beta_{11}(t,X^1,X^2)$, $\beta_{12}(t,X^1,X^2)$ and $\beta_{22}(t,X^1,X^2)$ of the OT-calibrated model for the market data example.}
\end{center}
\end{figure}

\clearpage
\bibliographystyle{acm}
\bibliography{ref}

\end{document}